\newtheorem{definition}{Definition}
\newtheorem{theorem}{Theorem}
\begin{document}

\title{Epass: Efficient and Privacy-Preserving Asynchronous Payment on Blockchain}

\author{Weijie~Wang, 
Jinwen Liang,~\IEEEmembership{Member,~IEEE,} Chuan~Zhang,~\IEEEmembership{Member,~IEEE,} Ximeng Liu,~\IEEEmembership{Senior Member,~IEEE,}  Liehuang~Zhu,~\IEEEmembership{Senior Member,~IEEE}, and 
Song~Guo,~\IEEEmembership{Fellow,~IEEE
}
\IEEEcompsocitemizethanks{
\IEEEcompsocthanksitem Weijie Wang is with the School of Computer Science, Beijing Institute of Technology, Beijing, China. E-mail: weijiew@bit.edu.cn.
\IEEEcompsocthanksitem Jinwen Liang and Song Guo are with the Department of Computing, The Hong Kong Polytechnic University, Hong Kong. E-mail: \{jinwen.liang, song.guo\}@polyu.edu.hk. 
\IEEEcompsocthanksitem Chuan Zhang and Liehuang Zhu are with the School of Cyberspace Science and Technology, Beijing Institute of Technology, Beijing, China. E-mail: \{chuanz, liehuangz\}@bit.edu.cn.
\IEEEcompsocthanksitem Ximeng Liu is with the College of Mathematics and Computer Science, Fuzhou University, and Fujian Provincial Key Laboratory of Information Security of Network Systems, Fuzhou, China. E-mail: snbnix@gmail.com. 
\IEEEcompsocthanksitem Weijie Wang and Jinwen Liang contribute to the work equally and should be regarded as co-first authors. 
\IEEEcompsocthanksitem Chuan Zhang is the corresponding author.}
\thanks{This research is financially supported by the ``National Key R\&D Program of China" (2021YFB2700500, 2021YFB2700503), the National Natural Science Foundation of China (Grant Nos. 6223000240, 62202051), the China Postdoctoral Science Foundation (Grant Nos. 2021M700435, and 2021TQ0042), the Shandong Provincial Key Research and Development Program (Grant No. 2021CXGC010106), and the Key-Area Research and Development Program of Guangdong Province under grant No.02021B0101400003.}
\thanks{Manuscript received April 19, 2021; revised August 16, 2021.}}


\maketitle

\begin{abstract}
Buy Now Pay Later (BNPL) is a rapidly proliferating e-commerce model, offering consumers to get the product immediately and defer payments. Meanwhile, emerging blockchain technologies endow BNPL platforms with digital currency transactions, allowing BNPL platforms to integrate with digital wallets. However, the transparency of transactions causes critical privacy concerns because malicious participants may derive consumers’ financial statuses from on-chain asynchronous payments. Furthermore, the newly created transactions for deferred payments introduce additional time overheads, which weaken the scalability of BNPL services. To address these issues, we propose an efficient and privacy-preserving blockchain-based asynchronous payment scheme (\textsf{Epass}), which has promising scalability while protecting the privacy of on-chain consumer transactions. Specifically, \textsf{Epass} leverages locally verifiable signatures to guarantee the privacy of consumer transactions against malicious acts. Then, a privacy-preserving asynchronous payment scheme can be further constructed by leveraging time-release encryption to control trapdoors of redactable blockchain, reducing time overheads by modifying transactions for deferred payment. We give formal definitions and security models, generic structures, and formal proofs for \textsf{Epass}. Extensive comparisons and experimental analysis show that \textsf{Epass} achieves KB-level communication costs, and reduces time overhead by more than four times in comparisons with locally verifiable signatures and Go-Ethereum private test networks.
\end{abstract}

\begin{IEEEkeywords}
  E-commerce, locally verifiable signatures, time-released encryption, redactable blockchain.
\end{IEEEkeywords}

\section{Introduction}
Buy Now Pay Later (BNPL) allows consumers to pay in installments, usually several equal parts, and get the product immediately \cite{bnpl-crypto}. This service enables consumers to enjoy the products they need early even if they don't have enough money available, helping people to enhance their life experiences. Many shopping platforms, such as Amazon, Taobao, Jingdong, etc., support BNPL services using USD, RMB, etc. At the same time, the rise of blockchain has developed a new form of currency: digital currencies such as bitcoin and ether. According to statistics, as of early February 2023, users were paying a total of \$4.4 million per day for Ethereum transactions \cite{CryptoFees}. Many businesses are also starting to support using digital currencies to shop for products. For example, Klarna allows customers to buy digital assets \cite{Klarna}, and Affirm added cryptocurrency to allow customers to buy and sell digital currency on their application \cite{Affirm}. Therefore, it is not surprising to see that the BNPL platforms enable their customers to use digital currencies. 

Existing works such as Pay Later Project (PLP) \cite{Paylaterproject2022}, Atpay \cite{Atpay}, and Apenow \cite{Apenow}, are performed by smart contracts to provide deferred payment services to consumers. Smart contracts allow for internal constraints requiring consumers to pay BNPL service providers at specific times \cite{sinha2021blockchain}. However, the data transparency on-chain leads to serious privacy concerns \cite{shah2019blockchain,zhang2019tppr}. Transactions provided by consumers to smart contracts are visible to all nodes in the blockchain network, which also results in deferred payment transactions being as public as regular transactions. This is unacceptable in real-world e-commerce applications, as malicious third parties may analyze consumers' financial situation based on their deferred payment transactions~\cite{li2022blockchain}. In addition, it undermines the scalability of BNPL services due to the additional time overhead incurred by consumers' newly created deferred payment transactions. Each deferred payment transaction generated by a consumer appears as a new transaction in the blockchain network, and these newly generated transactions must again go through a series of processes, including acceptance, mining, propagation, and node network validation \cite{gupta2021blockchain}. This procedure requires significant processing power and time and burdens the blockchain, especially for instalment payments.

Our goal is to build a services computing asynchronous payment system with the following features. First, it protects the privacy of users' transactions so that malicious third parties do not misuse their deferred payment transaction information. Second, it has a reasonable performance overhead and scalability as the number of users increases. An essential foundation for existing work to implement deferred payments is smart contracts, which monitor the behaviour of consumers and BNPL service providers through internal constraints. However, since smart contracts are public, all participants in the blockchain network can derive the input, execution process, and output of a smart contract~\cite{Zhang2022TMC}. Therefore, smart contracts will no longer meet our requirements because they are not suitable for protecting the privacy of users' transactions. In this work, we propose a deferred payment paradigm based on public key encryption with locally verifiable signatures to guarantee users' privacy. In addition, to preserve the scalability, we use the chameleon hash trap to modify the deferred payment transaction without creating additional new transactions, thus improving the efficiency of the blockchain. However, it is still challenging to enable BNPL service providers to make changes to transactions at specific times without using smart contracts. To address this issue, we combine timed-release encryption and a redactable blockchain to enable delayed payments. Specifically, we deploy servers that provide keys so that providers offering deferred payment services can only modify transactions at specific time intervals.

The main contributions of this paper.
\begin{itemize}
    \item We propose a blockchain-based deferred payment scheme (\textsf{Epass}) that aims to address the privacy and efficiency issues in blockchain-authorized BNPL services. It reduces the on-chain burden while protecting user privacy.
    \item \textsf{Epass} has less time overhead compared to existing schemes. We deploy servers to provide time instant key so that the deferred payment service provider can only rewrite the on-chain transactions at a specific instant of time. Subsequent changes to the transaction through the trapdoor of chameleon hash only require node network verification, saving time and arithmetic power compared to generating a new transaction.
    \item \textsf{Epass} supports the protection of user privacy in the case of multiple transactions generated by users in the system. All transaction signatures of a user are aggregated into a single aggregated signature that is used for aggregation verification. We extend the single verification of locally verifiable signatures to subset verification. Later, without the service provider knowing all of the user's transactions, the subset of transactions requiring deferred payment can be decompressed from this aggregated signature.
    \item We have conducted extensive experiments on \textsf{Epass}. The results show that \textsf{Epass} has practical security features and acceptable communication cost (KB-level). Compared to the baseline, the time overhead was reduced by more than four times.
\end{itemize}

The remainder of this paper is organized as follows. Section II describes the symbols used and the cryptographic primitives involved. Section III describes the system architecture, threat model and security model. Section IV describes our proposed system in detail and gives a formal definition.  Section V analyzes the correctness and security properties and performs a performance analysis in Section VI. Section VII describes the current issues in blockchain e-commerce solutions and gives a technical overview. Finally, we conclude our work in Section VIII.

\section{Preliminary}\label{preliminary}
In this section, we first define the notations to be used. Then we describe digital signatures, chameleon hashes and timed-release encryption. 
\subsection{Notation}
The notations and corresponding descriptions in this paper are provided in \Cref{tab:addlabel}.

\begin{table}[htbp]
    \centering
    \caption{Notations}
        \begin{tabular}{ll}
            \toprule
            \textit{\textbf{Notation}} & \textit{\textbf{Description}} \\
            \midrule
            $\mathbb{G}, \mathbb{G}_T$     & bilinear groups\\
            $H,H_1, H_2$                   & crytographic hash functions \\
            $\textit{mpk}, \textit{msk}$                     & master key pair \\
            $\textit{pk}_u, \textit{pk}_m, \textit{pk}_S$             & public key \\
            $\textit{pk}_{\textit{local}}$                   & local verification key \\
            $\textit{sk}_u, \textit{sk}_m, \textit{sk}_S$             & secret key \\
            $\sigma$                       & signature \\
            $C$                            & ciphertext \\
            $\hat{\sigma}$                 & aggregated signature \\
            $\textit{tx}$                           & transaction \\
            $\textit{aux}$                    & auxiliaty information \\
            \bottomrule
        \end{tabular}%
    \label{tab:addlabel}%
\end{table}%

\subsection{Digital Signatures}
\textbf{Digital Signatures:} A digital signature is a method of identifying digital information through public key cryptography to authenticate and confirm the identity and eligibility of the signer. Two complementary operations are typically defined by digital signatures, one for signing and the other for verifying. Sender $\mathcal{A}$ signs the data to be transmitted by its own private key to generate a digest, and then transmits the digest generated by the signature and the data to be transmitted together to receiver $\mathcal{B}$. Receiver $\mathcal{B}$ verifies the signature by $\mathcal{A}$'s public key after receiving the data. Four algorithms make up a digital signature $\mathsf{DS}$ \cite{kaur2012digital, merkle1988digital} with message space $\mathbb{M}$: $ \{ \mathsf{Setup}, \mathsf{KeyGen}, \mathsf{Sign}, \mathsf{Verify}  \} $.

\begin{itemize}
     \item \textit{$\mathsf{DS.Setup} \left ( 1^{\lambda }  \right )$  $\longrightarrow \left ( \textit{pp} \right ):$} Security parameter $\lambda \in \mathbb{N}$ is taken as an input, and a public parameter $\textit{pp}$ is output. Other algorithms take the public parameter $\textit{pp}$ as an implicit input.
 
    \item \textit{$\mathsf{DS.KeyGen} \left ( \textit{pp} \right )$  $\longrightarrow \left ( \textit{sk}, \textit{pk} \right ):$} Following the entry of a public parameter $\textit{pp}$, then provide a secret key $\textit{sk}$ and a public key $\textit{pk}$.

    \item \textit{$\mathsf{DS.Sign} \left ( \textit{sk}, m \right )$  $\longrightarrow \left ( \sigma \right ):$} A signature, denoted by the symbol $\sigma$, is produced upon the input of a secret key $sk$ and a message $m\in M$.

    \item \textit{$\mathsf{DS.Verify} \left ( \textit{pk}, \sigma, m \right )$  $\longrightarrow \left ( b \right ):$} A public key $pk$,a signature $\sigma$ and a message $m\in M$ is input, and a decision bit $b\in \left \{ 0,1 \right \}$ is produced.

\end{itemize}

\textbf{Existential Unforgeability Under a Chosen Message Attack (\textbf{EU-CMA}):} Knowing the public key $pk$, a probability polynomial-time ( probability polynomial-time, PPT) attacker is able to compute a valid signature for the new data $M'$ with negligible probability after obtaining the valid digital signature it wishes to obtain. If a digital signature scheme satisfies the above security requirements, then a valid digital signature can convince the data receiver that the data it receives has not been tampered with and the sender of the data is the owner of the corresponding public key $pk$. Next, we give the security model of \textbf{EU-CMA}. 

\begin{definition}{(\textbf{EU-CMA} Security):}
    On the subsequent experiment, the \textbf{EU-CMA} security definition of a digital signature $\mathsf{DS}$ is founded.

\fbox{%
  \parbox{0.4\textwidth}{

\vspace{1ex}\noindent\emph{$\textbf{Exp}_{\mathcal{A}, \textsf{DS}}^{\mathrm{EUF}-\mathrm{CMA}}\left(1^{\lambda}\right)$}:
\begin{enumerate}[  ]
    \item $pp \leftarrow \textsf{DS} . \textit{Setup}\left(1^{\lambda}\right);$ 
    \item $\mathcal{L}_{\textit{key}} \leftarrow \emptyset ; / / \text{KeyGen query list}$
    \item $\mathcal{L}_{\textit{sign}} \leftarrow \emptyset ; / / \text{Sign query list}$
    \item $\mathcal{L}_{\textit{corr}} \leftarrow \emptyset ; / / \text{Corrupt query list}$
    \item $\left(pk^{*}, m^{*}, \sigma^{*}\right) \leftarrow \mathcal{A}^{\mathcal{O}_{\textit{KeyGen}}(\cdot), \mathcal{O}_{\textit{Sign}}, \mathcal{O}_{\textit{Corrupt}}(\cdot)}(pp);$
    \item $\text{ if } pk^{*} \notin \mathcal{L}_{\textit{corr}} \wedge\left(pk^{*}, m^{*}\right) \notin \mathcal{L}_{\textit{sign}} \wedge$
    \item $\quad\textsf{DS}.\textit{Verify}\left(pk^{*}, \sigma^{*}, m^{*}\right)=1,$
    \item $\quad\text{return } 1.$
    \item $\text{else return } 0 .$
\end{enumerate}
where 
\begin{enumerate}[]
     \item \vspace{1ex}\noindent\emph{$\textbf{Oracle }\mathcal{O}_{\textit{KeyGen}}(i)$}:
    \item \ \ \ $(sk, pk) \leftarrow \mathcal{D} \mathcal{S} . \textit{ KeyGen }(pp);$
    \item \ \ \ $\mathcal{L}_{\textit{key}} \leftarrow \mathcal{L}_{\textit{key}} \cup\{(i, sk, pk)\};$
    \item \ \ \ $\text{return } pk.$
\end{enumerate}

\begin{enumerate}[]
     \item \vspace{1ex}\noindent\emph{$\textbf{Oracle }\mathcal{O}_{\textit{Sign}}(pk, m)$}:
     \item \ \ \ $\sigma \leftarrow \mathcal{DS} . \operatorname{Sign}(sk, m);$
     \item \ \ \ $\mathcal{L}_{\textit{sign}} \leftarrow \mathcal{L}_{\textit{sign}} \cup\{(pk, m)\};$
     \item \ \ \ $\text{return } \sigma.$
\end{enumerate}

\begin{enumerate}[]
     \item \vspace{1ex}\noindent\emph{$\textbf{Oracle }\mathcal{O}_{\textit{Corrupt}}(pk)$}:
     \item \ \ \ $\mathcal{L}_{\textit{corr}} \leftarrow \mathcal{L}_{\textit{corr}} \cup\{(pk)\};$
     \item \ \ \ $\text {return } sk.$
\end{enumerate}
  }
}

When the following advantage is negligible for any probabilistic polynomial-time adversary $\mathcal{A}$, we claim that a digital signature scheme $\mathsf{DS}$ is \textbf{EU-CMA} secure:

\begin{align*} 
\operatorname{Adv}_{\mathcal{A}, \textsf{DS}}^{\textbf{EU-CMA}}\left(1^{\lambda}\right)=\left|\operatorname{Pr}\left[\operatorname{Exp}_{\mathcal{A}, \textsf{DS}}^{\textbf{EU-CMA}}\left(1^{\lambda}\right)=1\right]\right|.
\end{align*}
\end{definition}

\subsection{Chameleon Hashes}
\textbf{Chameleon Hashes:} Compared with the traditional hash function's difficulty in finding collisions, chameleon hash can artificially set a trapdoor, and mastering this trapdoor makes it easy to find hash collisions. To a certain extent, chameleon hash destroys the two collision resistance (weak collision resistance and strong collision resistance) characteristics of the hash function, and at the same time, it also destroys the tamper-evident property of the blockchain based on the hash function. But chameleon hash also expands the application scenarios of blockchain, and it remains infeasible for ordinary users who do not know the threshold to find collisions. In other words, the security of chameleon hash can also be guaranteed. For managers holding trapdoors, if they tamper with the blocks at will, it is also possible to verify whether the hashes of two blocks are equal. Five algorithms make up a chameleon hashes $\mathsf{CH}$ \cite{krawczyk1998chameleon, ateniese2004identity} with message space $\mathbb{M}$: $ \{ \mathsf{Setup}, \mathsf{KeyGen}, \mathsf{Hash}, \mathsf{Verify}, \mathsf{Adapt}  \} $.

\begin{itemize}
    \item $\mathsf{CH.Setup} \left ( 1^{\lambda }  \right )$  $\longrightarrow \left ( \textit{pp} \right ):$ Security parameter $\lambda \in \mathbb{N}$ is taken as an input, and a public parameter $\textit{pp}$ is output. Other algorithms take the public parameter $\textit{pp}$ as an implicit input.

    \item $\mathsf{CH.KeyGen} \left ( \textit{pp} \right )$  $\longrightarrow \left ( \textit{sk}, \textit{pk} \right ):$ A public parameter $\textit{pp}$ is input, and a secret key $sk$ and a public key $\textit{pk}$ are produced.

    \item $\mathsf{CH.Hash} \left ( \textit{pk}, m, r \right )$  $\longrightarrow \left ( h \right ):$ A public key $\textit{pk}$, a message $m\in\mathcal{M}$ and a randomness $r$ is input, and a hash value $h$ is produced.

    \item $\mathsf{CH.Verify} \left ( \textit{pk}, m, h, r \right )$  $\longrightarrow \left ( b \right ):$ Following the entry of a public key $\textit{pk}$, a message $m\in \mathcal{M}$, a hash value $h$ and a randomness $r$, then provide a decision bit $b\in \left \{ 0,1 \right \}$.

    \item $\mathsf{CH.Adapt} \left ( \textit{sk}, m, h, r, m' \right )$  $\longrightarrow \left ( r' \right ):$ Following the entry of a secret key $\textit{sk}$, a message $m\in \mathcal{M}$, a hash value $h$, a randomness $r$ and a message $m'\in \mathcal{M}$, then provide a randomness $r'$.
\end{itemize}

\subsection{Timed-Release Encryption}
\textbf{Timed-Release Encryption:} Timed-release encryption is a cryptographic primitive with a specific future decryption time specified by the sender. Its time-dependent features are important in many time-sensitive real-world applications (e.g., electronic bidding \cite{dent2007revisiting}, installment payments \cite{mao2001timed}, electronic confidential files \cite{mont2003hp}). The sender sends an encrypted message to the receiver, and no user, including the receiver, can decrypt the message until the specified time. Four algorithms make up a timed-release encryption $\mathsf{TRE}$ \cite{di1999conditional, cathalo2005efficient} with message space $\mathcal{M}$ and time space $\mathcal{T} = \left [ 0, T-1 \right ]$: $\left \{ \mathsf{Setup}, \mathsf{Ext}, \mathsf{Enc}, \mathsf{Dec} \right \} $.

\begin{itemize}
    \item $\mathsf{TRE.Setup} \left ( 1^{\lambda }, T  \right )$  $\longrightarrow \left ( \textit{pp} \right ):$ Security parameter $\lambda \in \mathbb{N}$ and a  time instant $T$ are taken as the input, and a public parameter $\textit{pp}$ is produced.
 
    \item $\mathsf{TRE.Ext} \left ( \textit{mpk}, \textit{msk}, t \right )$  $\longrightarrow \left ( k_t \right ):$ A master public key $\textit{mpk}$, a master secret key $\textit{msk}$ and $t \in \mathcal{T}$ are input, and a time instant key (TIK) $k_t$ is produced.

    \item $\mathsf{TRE.Enc} \left ( \textit{mpk}, m, \left [ t_{0}, t_{1} \right ]  \right )$  $\longrightarrow \left ( C \right ):$ A master public key $\textit{mpk}$, a message $m\in \mathcal{M}$ and a Decryption Time Interval (DTI) $\left [ t_{0}, t_{1} \right ] \subseteq \mathcal{T}$ are input, and a ciphertext $C$ is produced.

    \item $\mathsf{TRE.Dec} \left ( \textit{mpk}, C, k_t \right )$  $\longrightarrow \left ( m/\bot \right ):$ Following the entry of a master public key $\textit{mpk}$, a ciphertext $C$ and a key $k_t$, then provide either a message $m$ or a failure symbol $\bot$.
\end{itemize}

\section{Problem Formulation}\label{problem-formulation}
In this section, we first define the system model and then give detailed descriptions of the threat model and security model.

\subsection{System Model}

As shown in \Cref{fig:System Model}, \textsf{Epass} is composed of a certificate authority (CA), users, providers, servers, and miners.

\begin{figure}[htpb!]
    \centering
    \includegraphics[width=0.48\textwidth]{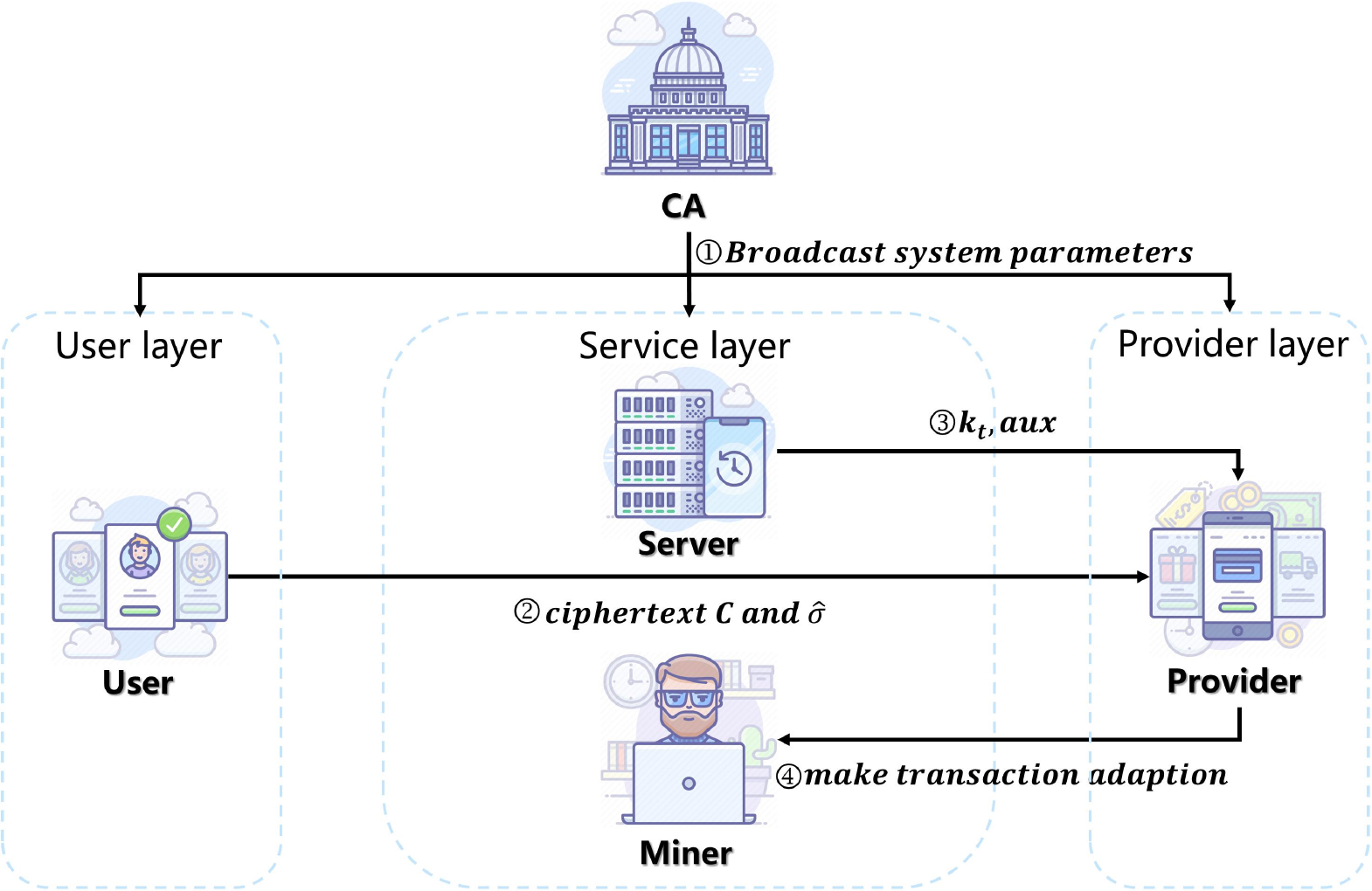}
    \caption{System Model}\label{fig:System Model}
\end{figure}

\begin{itemize}
    \item \textsf{Epass} needs to be initialized, and CA is the blockchain administrator who must broadcast the system parameters to the other participants.
    \item Users are participants in the chain and they are allowed to choose between two types of payment forms. One is instant payment, and transactions generated in this way cannot be rewritten. The other is asynchronous payment, which allows the specified provider to rewrite the content of the transaction.
    \item Providers are participants in the chain who provide asynchronous payment services to users and collect their fees when a specified time is reached.
    \item Servers are participants in the chain that broadcasts specific time nodes and provides additional auxiliary information to the provider.
    \item Miners are participants in the chain, independent and interconnected nodes that validate transactions and add them to the existing distributed public ledger.
\end{itemize}

\subsection{Threat Model}
In our proposed scheme, CA is considered to be fully trusted and miners are considered to be majority trusted, as in normal blockchain systems. The adversaries are classified into three categories based on their capabilities, i.e., intended-but-curious providers, honest-but-curious cloud and external adversary. The threat model is described in detail as follows.

\begin{itemize}
	
	\item \textbf{Intended-but-Curious Providers}: The provider performs the role of a receiver and a provider in our system. As a semi-curious participant, the provider expects to decrypt messages outside of the expected time interval and tries to modify the wrong transaction amount.
	
	\item \textbf{Honest-but-Curious Cloud}: The cloud server holds the private key $sk_s$ and is responsible for providing TIK $k_t$ and auxiliary information $aux$ at fixed intervals throughout the process. The cloud is semi-trustworthy and follows our protocol, but will attempt to launch attacks to compromise confidential messages, such as ciphertext-only attacks.
	
	\item \textbf{External Adversary}: An external attacker is neither the intended receiver nor aware of the $sk_s$, but he/she can eavesdrop during system communication and obtain the ciphertext by launching a ciphertext-only attack.
	
\end{itemize}

\subsection{Security Model}
We define a model for \textbf{IND-CPA} and \textbf{EU-CMA} security of \textsf{Epass} scheme.

\begin{definition}If the advantage of all adversaries in the game is negligible at most, then \textsf{Epass} is \textbf{IND-CPA} secure.

    \textbf{Setup.} The challenger $\mathcal{C}$ runs $\textsf{Epass}.Setup ( 1^{\lambda } )$ to generate a master secret key $msk$ and a master public key $\textit{mpk}$. Then, the master secret key $msk$ is given to the adversary $\mathcal{A}$.

    \textbf{Phase 1.} At any time $t\in T$, $\mathcal{A}$ can adaptively issue a TIK extraction query to oracle. Oracle will respond to each query with $k_t$.

    \textbf{Challenge.} $\mathcal{A}$ chooses two messages $m_0$ and $m_1 \in \mathcal{M}$ and a time interval $[t_{0},  t_{1}] \subseteq T$ with the constraint that for all queries t in Phase 1, $t\notin [t_{0},  t_{1}]$. $\mathcal{A}$ passes $m_0$, $m_1$, $[t_{0},  t_{1}]$ to $\mathcal{C}$. $\mathcal{C}$ chooses a random bit $b$ and computes 
    \begin{align*}
    c' = \textsf{Epass}.TrCreat \left ( \textit{mpk}, \textit{sk}_u, \left ( \textit{ID}, \textit{tx}_{\textit{ID}}\right ) , \textit{sk}_{h}, T \right ).\end{align*}
    $c'$ is passed to $\mathcal{A}$.

    \textbf{Phase 2.} $\mathcal{A}$ continues to query the TIK extract oracle using the same restrictions as the Challenge phase.

    \textbf{Guess.} The adversary outputs its guess $b'$ for $b$.

    The output of this game is defined as 1 when $b'$ = $b$ and 0 otherwise. If the output of the game is 1, we say that $\mathcal{A}$ succeeds. We denote the advantage of $\mathcal{A}$ winning the game by:
    \begin{align*} Adv_{\mathcal{A}}(\kappa)=\left|\operatorname{Pr}\left[b^{\prime}=b\right]-\frac{1}{2}\right|.
    \end{align*}

    We solve the security problem of \textbf{IND-CCA} by extending the definition. First, consider a Decrypt oracle in addition. On input a pair $(c, t)$, it gets the response $kt$ after passing $t$ to the TIK extraction oracle, where $c$ represents the ciphertext and $t\in T$. Then, the Decrypt oracle returns a message $m$ or a failure symbol $\perp$ to the adversary by running $\textsf{Epass}.ReleasedDec \left ( \hat{\sigma}, \textit{pk}_u, \textit{pk}_{S}, m, \textit{aux}, C, k_t \right )$. The Decrypt oracle can adaptively issue queries $(c, t)$ in both Phase 1 and Phase 2, but will be restricted in the latter phase, i.e., the adversary cannot make decrypt queries $(c, t)$, where $c = c'$, $t \in T$, $c'$ represents the challenge ciphertext and $T$ represents the time interval. Under this restriction, the adversary cannot win the game in a simple way.

\end{definition}

\begin{definition}(\textbf{EU-CMA} Security): If the advantage $\operatorname{Adv}_{\mathcal{A}, \mathcal{\textsf{Epass}}}^{\mathrm{EU-CMA}}(1^{\lambda})=   \operatorname{Adv}_{\mathcal{A}, \\textsf{DS}}^{\textbf{EU-CMA}}(1^{\lambda}) $ is negligible for any probabilistic polynomial-time adversary $\mathcal{A}$, then \textsf{Epass} is \textbf{EU-CMA} secure. $\mathcal{O}$ is defined as the set of oracles, including:  a provider key generation oracle $\mathcal{O}_{\textit{KeyGen}_{p}}(\cdot)$,  a provider corrupt oracle $\mathcal{O}_{\textit{Corrupt}_{p}}(\cdot)$, a user key generation oracle $\mathcal{O}_{\textit{KeyGen}_{u}}(\cdot)$, a user corrupt oracle $\mathcal{O}_{\textit{Corrupt}_{u}}(\cdot)$, a server key generation oracle $\mathcal{O}_{\textit{KeyGen}_{S}}(\cdot)$, a server corrupt oracle $\mathcal{O}_{\textit{Corrupt}_{S}}(\cdot)$, a hash oracle $\mathcal{O}_{\textit{TrCreat}}(\cdot, \cdot)$, an aggregate oracle $\mathcal{O}_{\textit{Aggregate}}( \cdot,  \cdot )$ and an adaption oracle $\mathcal{O}_{\textit{Adapt}}( \cdot, \cdot,  \cdot, \cdot,  \cdot)$.
\end{definition}

\subsection{Design Goals}
Based on the requirements of the above models, our design goals are divided into two aspects: privacy and efficiency.
\begin{itemize}
    \item \textit{Privacy}: The privacy of user transactions should be guaranteed in asynchronous payments. The deferred payment transactions generated by the user should not be made public to prevent malicious third parties from analyzing the user's financial status accordingly.

    \item \textit{Efficiency}: Reasonable performance overhead and scalability should be ensured. \textsf{Epass} saves time and computational power compared to existing works while satisfying basic transaction processing requirements.
\end{itemize}

\section{Privacy-Preserving Blockchain-based Asynchronous Payment Scheme} In this section, we give the formal definition and scheme description of \textsf{Epass}.

\label{construction}
\subsection{Formal Definition}
\textsf{Epass} is made up of the ten algorithms listed below:

$\textsf{Epass}.\textit{Setup}\left ( 1^{\lambda }  \right ) \rightarrow \left ( \textit{pp}, \textit{msk}, \textit{mpk} \right ):$ CA manages the setup algorithm. It accepts a security parameter $\lambda \in \mathbb{N}$ as input. It produces a public parameter $\textit{pp}$, a master secret key $\textit{msk}$, and a master public key $\textit{mpk}$, with $\textit{pp}$ and $\textit{mpk}$ serving as implicit inputs to all other algorithms.
	
$\textsf{Epass}.\textit{UserKeyGen} \left ( \textit{pp}  \right ) \rightarrow \left ( \textit{sk}_{u}, \textit{pk}_{u}\right ):$ Each user executes the user setup algorithm. It accepts a public parameter $\textit{pp}$ as input and returns a secret key $\textit{sk}_u$ and a public key $\textit{pk}_u$ in the form of outputs.

$\textsf{Epass}.\textit{ProviderKeyGen} \left ( \textit{pp}  \right ) \rightarrow \left ( \textit{sk}_{p}, \textit{pk}_{p} \right ):$ Each provider executes the provider setup algorithm. A secret key $\textit{sk}_p$ and a public key $\textit{pk}_p$ are produced from an input of a public parameter $\textit{pp}$.

$\textsf{Epass}.\textit{ServerKeyGen} \left ( \textit{pp}  \right ) \rightarrow \left ( \textit{sk}_{S}, \textit{pk}_{S} \right ):$ The server manages the server setup algorithm. A secret key $\textit{sk}_{S}$ and a public key $\textit{pk}_{S}$ are produced from an input of a public parameter $\textit{pp}$.

$\textsf{Epass}.\textit{TrCreat} \left ( \textit{mpk}, \textit{sk}_u, \left ( \textit{ID}, \textit{tx}_{\textit{ID}}\right ) , \textit{sk}_{h}, T \right )$ $\rightarrow \left ( h, r, \sigma, C \right ):$ Each user executes the hash algorithm. The inputs are a message with a transaction identity $\textit{ID}$ and its content $\textit{tx}_{\textit{ID}}$, a master public key $\textit{mpk}$, a pair of secret keys $\textit{sk}_u$ and $\textit{sk}_h$, and a decryption time $T$. It generates a ciphertext $C$, a signature $\sigma$, a hash value $h$, and randomness $r$.

$\textsf{Epass}.\textit{Aggregate} \left ( \textit{pk}_{u}, \left \{ \left ( \left ( \textit{ID}_{i} , r_{i}  \right ), \sigma_{i} \right )     \right \} _{i}    \right ) \rightarrow \left ( \hat{\sigma } /\bot  \right ):$ All input signatures $\sigma_{i}$ are first verified by the signature aggregation process, which outputs $\bot$ if any of these verifications are unsuccessful. If not, it produces the aggregated signature $\hat{\sigma }$.

$\textsf{Epass}.\textit{Ext} \left ( \textit{pk}_{S}, \textit{sk}_{S}, \textit{pk}_{u}, \left \{ \textit{tx}_{i}  \right \} _{i\in \left [ \ell  \right ] }, j\in \left [ \ell  \right ] \right ) \rightarrow \left ( k_t, \textit{aux} \right ):$ On the server, the extraction algorithm is executed. It accepts a collection of transactions $\left \{ \textit{tx}_{i}  \right \} _{i\in \left [ \ell  \right ] }$, public keys $\textit{pk}_{S}$ and $\textit{pk}_{u}$, a secret key $\textit{sk}_{S}$, and outputs a TIK $k_t$ and the auxiliary information $\textit{aux}$.

$\textsf{Epass}.\textit{ReleasedDec} \left ( \hat{\sigma}, \textit{pk}_u, \textit{pk}_{S}, m, \textit{aux}, C, k_t  \right ) \rightarrow \left ( \textit{sk}_{h} /\bot\right ):$ Each provider executes the timed-release decryption algorithm. It accepts the following as inputs: public keys $\textit{pk}_u$ and $\textit{pk}_{S}$, an aggregate signature $\hat{\sigma}$, a message with a transaction identity $\textit{ID}$ and its content $\textit{tx}_{\textit{ID}}$ and $r$, the auxiliary information $\textit{aux}$, a ciphertext $C$, and the TIK $k_t$. And if any of these verification fail, outputs $bot$. If not, the aggregated signature $\textit{sk}_{h}$ is output.

$\textsf{Epass}.\textit{Adapt} \left ( \textit{mpk}, \textit{sk}_{p}, \left ( \textit{ID}, \textit{tx}_{\textit{ID}} \right ),  h, r, \left ( \textit{ID}, \textit{tx} _{\textit{ID}}' \right )  \right )$ $\rightarrow \left ( r', \sigma' \right ):$ Each provider executes the adapt algorithm. It accepts the following inputs: the master public key $\textit{mpk}$, the secret key $\textit{sk}_{p}$, the message's transaction identity $\textit{ID}$ and its content $\textit{tx}_{\textit{ID}}$, the hash value $h$, the randomness $r$, and the message's transaction identity $\textit{ID}$ and its content $\textit{tx} _{\textit{ID}}'$. It generates two values: randomness $r'$ and a signature $\sigma'$.

$\textsf{Epass}.\textit{Verify} \left ( \textit{mpk}, \textit{pk}_{u}, \left ( \textit{ID}, \textit{tx}_{\textit{ID}} \right ), h, r, \sigma_{\textit{ID}}  \right ) \rightarrow \left ( b \right ):$ Miners operate the verification algorithm. A master public key $\textit{mpk}$, a public key $pk_{u}$, a message including a transaction identity $\textit{ID}$ and its content $\textit{tx}_{\textit{ID}}$, a hash value $h$, a randomness $r$, and a signature $\sigma_{\textit{ID}} $ are all inputs required. It generates a judgment bit $b\in \left \{ 0,1 \right \}$ indicating the validity of the transaction $\left ( \textit{ID}, \textit{tx}_{\textit{ID}} \right )$.

\subsection{Proposed Scheme}
The four phases of \textsf{Epass} are system initialization, transaction making, transaction rewriting, and transaction verification.

\begin{itemize}
    \item[1)] \textbf{System Initialization:} The initialization of \textsf{Epass} is displayed in \Cref{fig:System Initialization}. This process can be more specifically divided into system setup, user key generation, provider key generation, and server key generation.
    
    \begin{figure}[htpb!]
        \centering
        \includegraphics[width=0.48\textwidth]{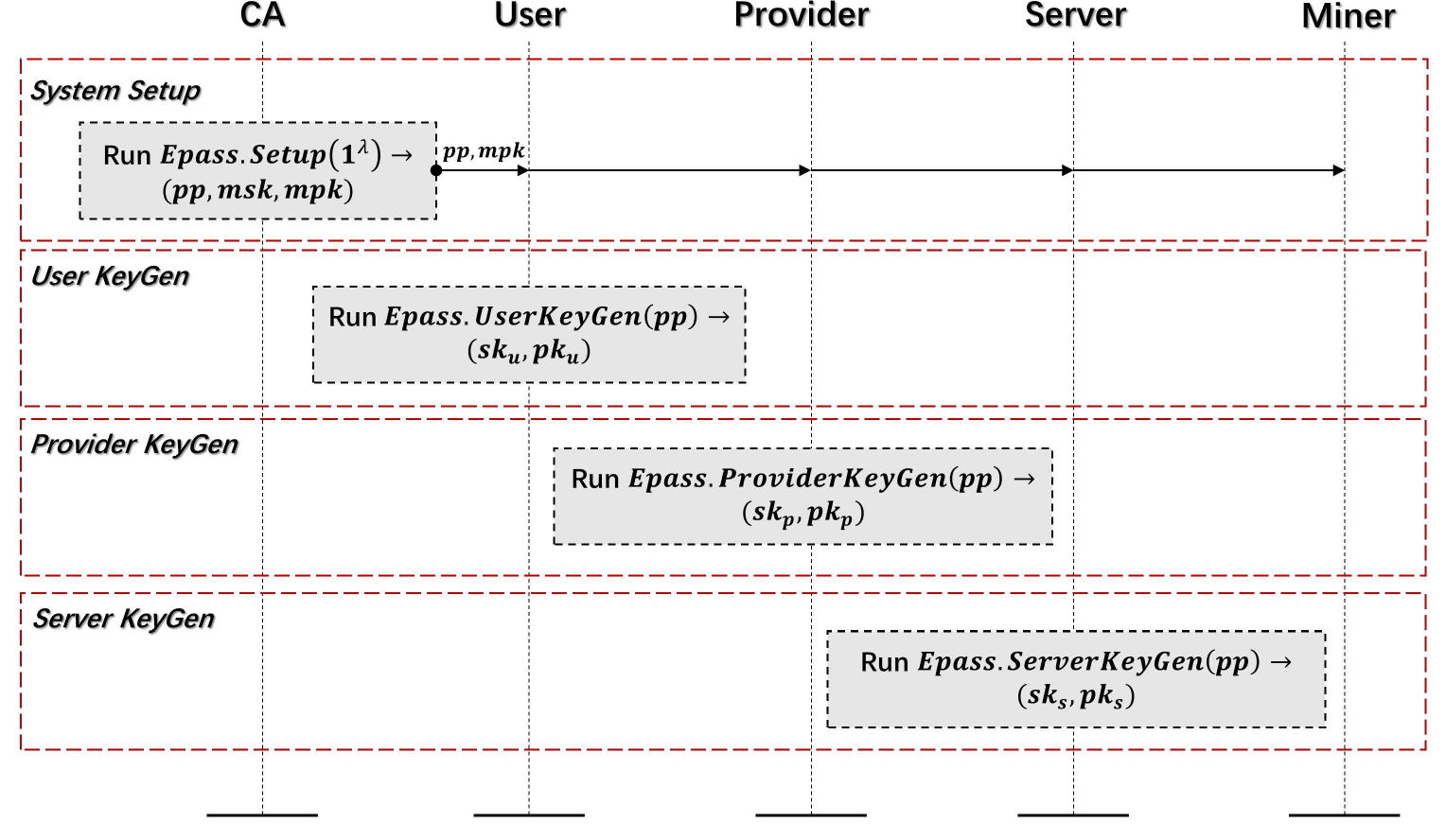}
        \caption{System Initialization}\label{fig:System Initialization}
    \end{figure}
    
    $\textsf{Epass}.\textit{Setup}\left ( 1^{\lambda }  \right ) \rightarrow \left ( \textit{pp}, \textit{msk}, \textit{mpk} \right ):$ Given a security parameter $\lambda \in \mathbb{N}$ and the upper bound on number of aggregations $B$, set $\textit{pp}_\mathsf{DS}=(p,\mathbb{G},\mathbb{G}_T,g,\hat{e})$ as the bilinear group used in our construction, where $\hat{e}:\mathbb{G}\times\mathbb{G}\rightarrow\mathbb{G}_T$, and $\mathbb{G}$ is a prime $p$ order group. The public parameters of a chameleon hash is $\textit{pp}_\mathsf{CH} \gets \mathsf{CH}.\textit{Setup} \left ( 1^{\lambda }  \right )$. Selects a random number $\alpha \gets \mathbb{Z}_{p}^{*}$, and samples the public parameters for message hashing as $\textit{hk}\gets  HGen\left ( 1^{\lambda }  \right )$.  It sets the key pair as $\textit{pk}_{\alpha } =( \textit{pp}_\mathsf{DS}, \{ g^{\alpha ^{i} }  \}_{i\in \left [ B \right ] }   ) $, and $\textit{sk}_{\alpha } =\left ( \textit{pp}_\mathsf{DS}, \alpha  \right )$. Then, a collision-resistant hash function $H:\left \{ 0, 1 \right \} ^{*} \to \mathbb{Z}_{p}$ is chosen, and the following cryptographic hash functions $H_{1}:\left \{ 0, 1 \right \}^{*} \to G^{*}$, $H_{2}:G_{T}^{*}\to \left \{ 0, 1 \right \}^{n}$ are constructed. The algorithm outputs a public parameter $\textit{pp} = (\textit{pp}_\mathsf{DS}, \textit{pp}_\mathsf{CH}, \textit{hk})$, a master secret key $\textit{msk} = \textit{sk}_{\alpha }$, and a master public key $\textit{mpk} = (\textit{pk}_{\alpha }, H)$, where $\textit{pp}$ and $\textit{mpk}$ are made available to the public.


   $\textsf{Epass}.\textit{UserKeyGen} \left ( \textit{pp}  \right ) \rightarrow \left ( \textit{sk}_{u}, \textit{pk}_{u}\right ):$ The user key generation algorithm initializes a signature key-pair $\textit{sk}_{\beta} =\left ( \textit{pp}_\mathsf{DS}, \textit{hk}, \beta \gets \mathbb{Z}_{p}^{*}  \right )$, and $\textit{pk}_{\beta} =( \textit{pp}_\mathsf{DS}, \textit{hk},  \{ g^{\beta ^{i} }  \}_{i\in \left [ B \right ] } ) $, where $B$ represents the upper bound for deferred payment transactions. It then initializes a chameleon hash key-pair $\left ( \textit{sk}_{h}, \textit{pk}_{h} \right ) \gets \mathsf{CH.KeyGen} \left ( \textit{pp}_{\mathsf{CH}}  \right )$, a timed-release key pair $\textit{sk}_{\textit{tre}}=s\gets \mathbb{Z}_{p}^{*}$, $\textit{pk}_{\textit{tre}}=(\textit{sg}, s\alpha'g)$ and generate a local verification key $\textit{pk}_{\textit{local}} =\left(\Pi, \mathrm{hk}, g^{\alpha}\right)$. The algorithm returns a secret key $\textit{sk}_u = (\textit{sk}_{\beta}, \textit{sk}_{\textit{tre}})$ and a public key $\textit{pk}_u = \left (\textit{pk}_{\beta}, \textit{pk}_{h}, \textit{pk}_{\textit{tre}}, \textit{pk}_{\textit{local}} \right )$.


    $\textsf{Epass}.\textit{ProviderKeyGen} \left ( \textit{pp}  \right ) \rightarrow \left ( \textit{sk}_{p}, \textit{pk}_{p} \right ):$ The provider key generation algorithm initializes a signature key-pair $\textit{sk}_{p'} = ( \textit{pp}_\mathsf{DS}, \textit{hk}, \mu \gets \mathbb{Z}_{p}^{*} )$, and $\textit{pk}_{p'} =( \textit{pp}_\mathsf{DS}, \textit{hk}, \{ g^{{\mu} ^{i} } \}_{i\in  [B]}) $. The algorithm outputs two keys: a secret one $\textit{sk}_p = \textit{sk}_{p'}$ and a public one $\textit{pk}_p = \textit{pk}_{p'}$.


    $\textsf{Epass}.\textit{ServerKeyGen} \left ( \textit{pp}  \right ) \rightarrow \left ( \textit{sk}_{S}, \textit{pk}_{S} \right ):$ The server key generation algorithm initializes a key-pair as $\textit{sk}_{\alpha'} =\left ( \textit{pp}_\mathsf{DS}, \alpha'\gets \mathbb{Z}_{p}^{*}  \right )$, and $\textit{pk}_{\alpha' } =( \textit{pp}_\mathsf{DS}, g\gets \mathbb{G},  Z=\alpha'g )$, $g$ and $\alpha'g$ are made public. The algorithm returns a secret key $\textit{sk}_{S} = \textit{sk}_{\alpha'}$ and a public key $\textit{pk}_{S} = Z$.


    \item[2)] \textbf{Transaction Making:} \Cref{fig:Transaction Making} shows the \textsf{Epass} transaction making. Each user creates two kinds of transactions during this phase: redactable and immutable. The redactable transactions can be modified by the provider at a specific time. After that, the server generates TIK $k_t$ and auxiliary information $\textit{aux}$.
    
    \begin{figure}[htpb!]
        \centering
        \includegraphics[width=0.48\textwidth]{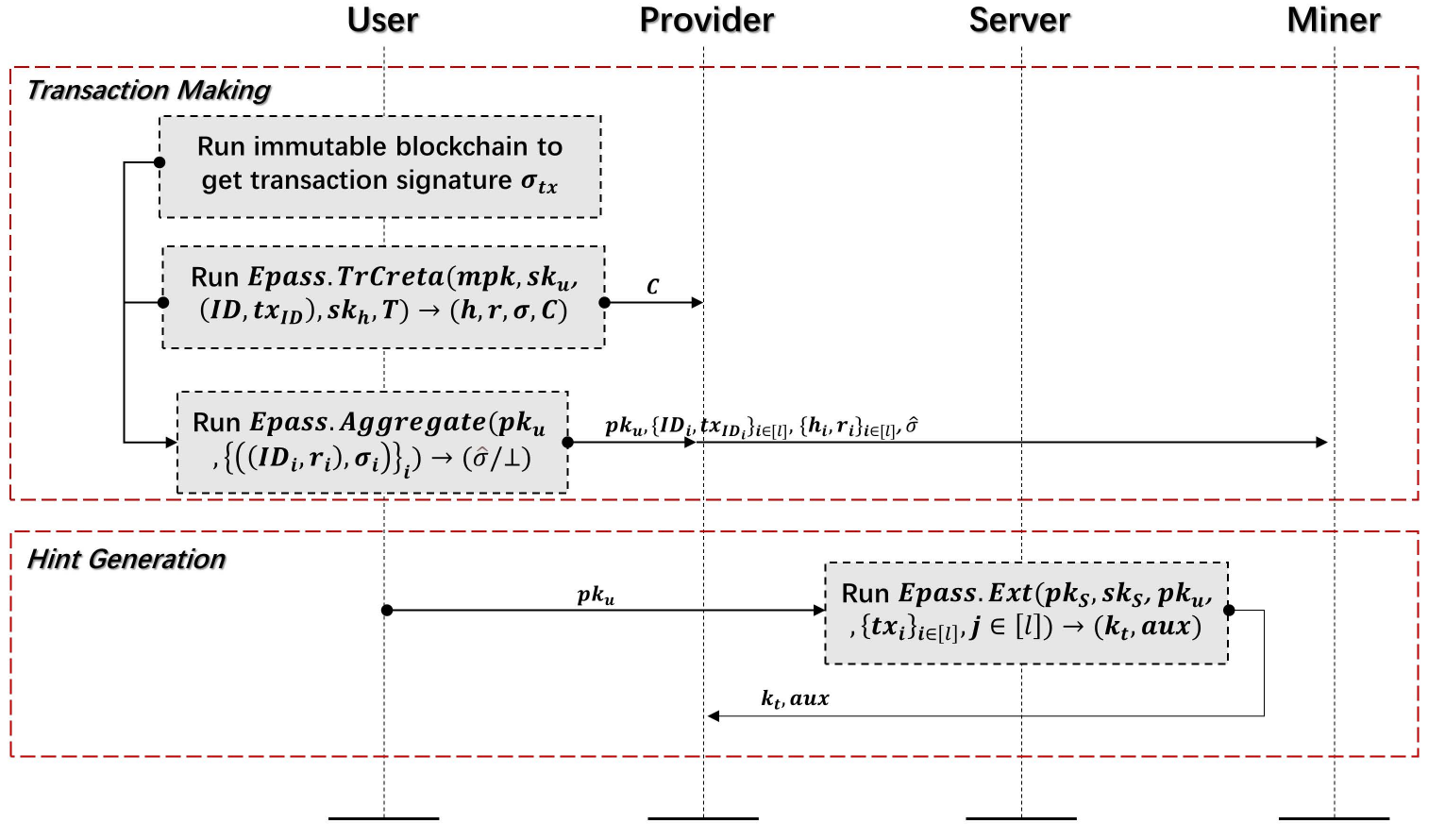}
        \caption{Transaction Making}\label{fig:Transaction Making}
    \end{figure}

    \begin{itemize}
        \item \textbf{Ordinary and deferred transaction making:} The user can generate ordinary transactions as in a traditional immutable blockchain, or run $\textsf{Epass}.\textit{TrCreat}  ( \textit{mpk}, \textit{sk}_u,  ( \textit{ID}, \textit{tx}_{\textit{ID}} ) , \textit{sk}_{h}, T  )$ to generate deferred transactions, as shown in Algorithm 1. In a normal transaction, the user generates a transaction $\textit{tx}$ and the corresponding signature $\sigma_{\textit{tx}}$ by means of the key $\textit{sk}_u$ and the traditional hash function. For deferred transactions, the user generates the transaction $(\textit{ID},\textit{tx}_{\textit{ID}})$ and the signature $(h,r, \sigma_{\textit{ID}})$ by means of the key $sk_u$ and the chameleon hash function. Finally, the aggregated signature $\hat{\sigma}$ is generated by running $\textsf{Epass}.\textit{Aggregate}  ( \textit{pk}_{u},  \{  (  ( \textit{ID}_{i} , r_{i}   ), \sigma_{i}  )      \} _{i}     ) $, as shown in Algorithm 2. The user propagates to the blockchain ecosystem the public key $\textit{pk}_p$, the transaction set $ \{ \textit{tx}_{i}   \} _{i\in [ \ell ] }$, and the aggregated signature $\hat{\sigma }$.

\begin{algorithm}\label{algr5}
	\caption{$\textsf{Epass}.\textit{TrCreat} ( \textit{mpk}, \textit{sk}_u, ( \textit{ID}, \textit{tx}_{\textit{ID}}) , \textit{sk}_{h}, T  )$}
	\KwIn{Master public key $\textit{mpk}$, secret keys $\textit{sk}_u$ and $\textit{sk}_h$, a decryption time $T$, and a message containing the transaction identity $\textit{ID}$ and the transaction's content $\textit{tx}_{\textit{ID}}$.}
 
    \KwOut{The randomness $r$, a hash value $h$, a randomness $r$, a signature $\sigma$ and ciphertext $C$.}
    Choose a random number $r$;\\
    Compute a hash value: 
    \begin{align*} 
    h\gets \mathsf{CH}.\textit{Hash}\left ( \textit{pk}_{h}, (\textit{ID}, \textit{tx}_{\textit{ID}}), r \right );
    \end{align*}\\
    Generate a signature: $\sigma _{\textit{ID}} = g^{(1/\textit{sk}_{u}+h)}$;\\
    Check if $\hat{e}(\textit{sg}, \alpha 'g)\stackrel{?}{=}\hat{e}(g, s\alpha 'g)$;\\
    If so, random choose $r_0$, and compute $r_0g$ and $r_0s\alpha'g$;\\
    Compute 
    \begin{align*}
    K=\hat{e}\left(r_{0} s\alpha 'g, H_{1}(T)\right)=\hat{e}\left(g, H_{1}(T)\right)^{r_{0} s\alpha '};
    \end{align*}\\
    Compute the ciphertext:
    \begin{align*}
    C=<r_0g, sk_h \oplus H_{2}(K)>.
    \end{align*}\\
    \textbf{return} The hash value $h$, the random number $r$, the signature $\sigma _{\textit{ID}}$ and the ciphertext $C$.
\end{algorithm}

\begin{algorithm}\label{algr6}
	\caption{$\textsf{Epass}.\textit{Aggregate}  ( \textit{pk}_{u},  \{  ( ( \textit{ID}_{i} , r_{i}  ), \sigma_{i} ) \} _{i} )$}
	\KwIn{Public key $\textit{pk}_{u}$, signatures $\sigma_{i}$.}
    \KwOut{The aggregated signature $\hat{\sigma }$.}
    \eIf{ $e(\sigma, g^{\alpha} g^{h}) = e(g, g).$}
    {Compute:
    $\gamma _{i}=\frac{1}{\prod_{i \neq j}\left(x_{i}-x_{j}\right)},$
    \\
    and compute the aggregated signature:
    \begin{align*}
    \hat{\sigma}=\prod_{i} \sigma_{i}^{\gamma i}.
    \end{align*}}
    {\textbf{return} $\bot.$}
\end{algorithm}
        
        \item \textbf{Extraction:} At a time instant $T\in \left \{ 0, 1 \right \} ^{*}$, the server publishes $k_t=\alpha 'H_{1} (T)$, every user can verify its authenticity by checking $\hat{e}\left(\alpha 'g, H_{1}(T)\right)=\hat{e}\left(g, \alpha 'H_{1}(T)\right).$ Then, the server generates TIK $k_t$ and auxiliary information $aux$ by running $\textsf{Epass}.\textit{Ext}  ( \textit{pk}_{S}, \textit{sk}_{S}, \textit{pk}_{u},  \{ \textit{tx}_{i}  \} _{i\in [\ell] }, j\in[ \ell]) $, as shown in Algorithm 3. These information will be published at a specific time for the provider to verify the signature and perform decryption operations.

    \begin{algorithm}\label{algr7}
    	\caption{$\textsf{Epass}.\textit{Ext} ( \textit{pk}_{S}, \textit{sk}_{S}, \textit{pk}_{u}, \{ \textit{tx}_{i} \} _{i\in  [ \ell ] })$}
    	\KwIn{Public keys $\textit{pk}_{S}$ and $\textit{pk}_{u}$, a secret key $\textit{sk}_{S}$, a set of transactions $ \{ \textit{tx}_{i}   \} _{i\in  [ \ell  ] }$.}
        \KwOut{The TIK $k_t$ and the auxiliary information $\textit{aux}$.}
        \eIf{ $\hat{e}(\alpha 'g, H_{1}(T))=\hat{e}(g, \alpha 'H_{1}(T)).$}
        {Generate the auxiliary informations: 
        \begin{align*} 
        \textit{aux}_{j, 1}=g^{\prod_{i \neq j}(\alpha+h_{i})},
        \end{align*}
        \begin{align*}
        \textit{aux}_{j, 2}=g^{\alpha \prod_{i \neq j}(\alpha+h_{i})};
        \end{align*}\\
        Compute the following polynomial $P$ to obtain the coefficients $\{\tilde{\delta _{i} }  \in \mathbb{Z}_{p}\}_{i \in[\ell-1]}:$
        \begin{align*}
        P_{\left\{x_{i}\right\}_{i \in[\ell] \backslash\{j\}}}(y)&=\prod_{i \in[\ell] \backslash\{j\}}\left(y+x_{i}\right)\\
        &=\sum_{i=0}^{\ell-1} \widetilde{\delta}_{i} y^{i}(\bmod p);
        \end{align*}\\ 
        Compute 
        \begin{align*}
        \operatorname{\textit{aux}}_{j, 1}=\prod_{i=0}^{\ell-1}(g^{\alpha^{i}})^{\widetilde{\delta }_{i}},
        \end{align*}
        \begin{align*}
        \operatorname{\textit{aux}}_{j, 2}=\prod_{i=0}^{\ell-1}(g^{\alpha^{i+1}})^{\widetilde{\delta _{i}}};
        \end{align*}\\
        Outputs the auxiliary informations 
        \begin{align*}
        \operatorname{\textit{aux}}_{j}=(\operatorname{\textit{aux}}_{j, 1}, \operatorname{\textit{aux}}_{j, 2}).
        \end{align*}}
        {\textbf{return} Null}
    \end{algorithm}
    \end{itemize}

    \item[3)] \textbf{Transaction Rewriting:} \Cref{fig:Transaction Rewriting} shows the \textsf{Epass} transaction rewriting. In this phase, the provider decompresses the subset of transaction signatures that need to be paid asynchronously and performs the decryption operation based on the TIK $k_t$ and auxiliary information $aux$ provided by the server. The designated provider can rewrite the redactable transaction.

    \begin{figure}[htpb!]
        \centering
        \includegraphics[width=0.48\textwidth]{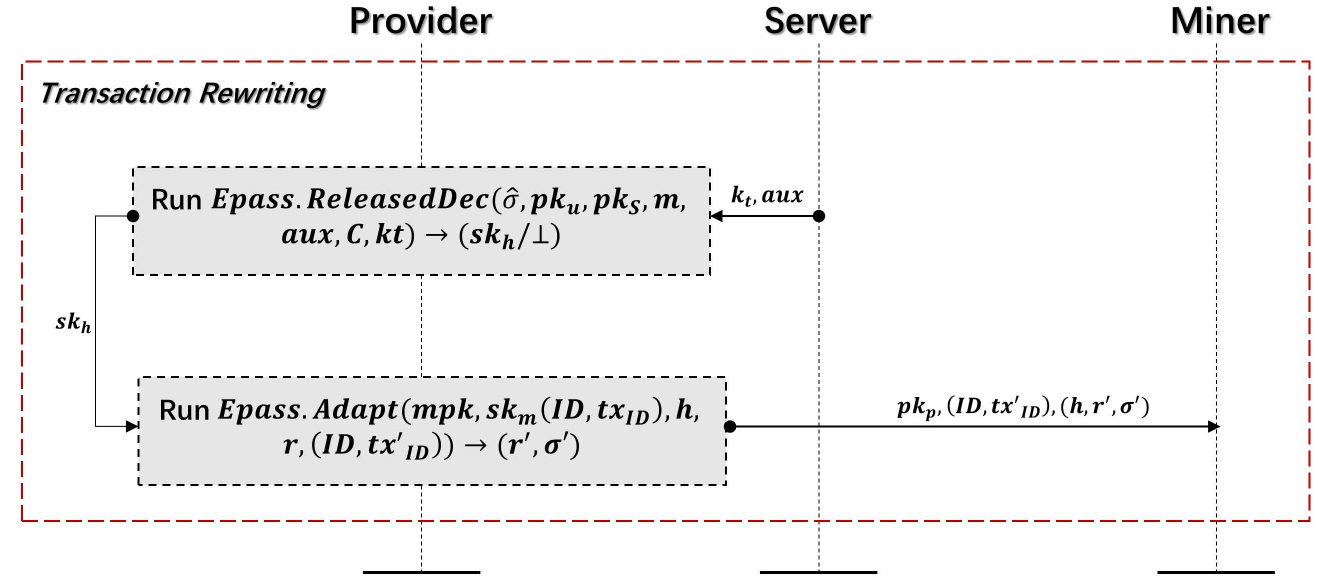}
        \caption{Transaction Rewriting}\label{fig:Transaction Rewriting}
    \end{figure}

    \begin{itemize}
        \item \textbf{Timed-release decryption:} The provider holds the TIK $k_t$ and auxiliary information $\textit{aux}$ provided by the server, and gets the secret key  $\textit{sk}_h$ and the subset of transactions that need to be rewritten by running the $\textsf{Epass}.\textit{ReleasedDec} \left ( \hat{\sigma}, \textit{pk}_u, \textit{pk}_{S}, m, \textit{aux}, C, k_t \right )$, as shown in Algorithm 4. After that, the provider can use this secret key to rewrite the transaction.

    \begin{algorithm}\label{algr8}
    	\caption{$\textsf{Epass}.\textit{ReleasedDec} ( \hat{\sigma}, \textit{pk}_u, \textit{pk}_{S}, m, \textit{aux}, C, k_t  )$}
    	\KwIn{Public keys $\textit{pk}_u$ and $\textit{pk}_{S}$, an aggregate signature $\hat{\sigma}$, a message containing the transaction's identity $\textit{ID}$ and content $\textit{tx}_{\textit{ID}}$ and $r$, the auxiliary information $\textit{aux}$, a ciphertext $C$ and the TIK $k_t$.}
        \KwOut{The aggregated signature $\textit{sk}_{h}$ or $\bot$.}
        Compute the message hash set as: 
        \begin{align*} 
        \{ h_m \}_{i} \gets \mathsf{CH}.\textit{Hash}( \textit{pk}_{h}, \{ m \}_{i} );
        \end{align*}\\
        \eIf{$e(\widehat{\sigma}, \operatorname{\textit{aux}}_{1}^{ \{ h_m \}_{i}} \operatorname{\textit{aux}}_{2})\stackrel{?}{=}e(g, g)$ and $e(g^{\alpha}, \mathrm{\textit{aux}}_{1})\stackrel{?}{=}e(g, \mathrm{\textit{aux}}_{2})$} 
        {{\textbf{return} 1 to signal that the signature is valid;}\\
        Parse $C$ as 
        \begin{align*} 
        <r_0g, \rho=\textit{sk}_h \oplus H_{2}(K)>;
        \end{align*}\\
        Compute 
        \begin{align*} 
        K^{\prime}=\hat{e}(r_0g,\alpha ' H_{1}(T))^{s}=\hat{e}(g, H_{1}(T))^{r_{0} s\alpha '}=K;
        \end{align*}\\
        Recover $\textit{sk}_h$ by computing $\rho  \oplus H_{2}(K)$.\\
        {\textbf{return} $\textit{sk}_h$}}
        {\textbf{return} $\bot$}
    \end{algorithm}

        \item \textbf{Transaction rewriting:} The provider runs the $\textsf{Epass}.\textit{Adapt} \left ( \textit{mpk}, \textit{sk}_{p}, \left ( \textit{ID}, tx_{\textit{ID}} \right ), h, r, \left ( \textit{ID}, \textit{tx} _{\textit{ID}}' \right ) \right )$ algorithm to generate random numbers $ r'$ and signatures $\sigma' $, as shown in Algorithm 5. Then, the provider broadcasts the public key $pk_p$, the transaction $(\textit{ID},\textit{tx} _{\textit{ID}}')$ and the signature $(h,r',\sigma')$.

    \begin{algorithm}\label{algr9}
    	\caption{$\textsf{Epass}.\textit{Adapt}  ( \textit{mpk}, \textit{sk}_{p},  ( \textit{ID}, \textit{tx}_{\textit{ID}} ),  h, r,$\\ $ ( \textit{ID}, \textit{tx} _{\textit{ID}}' )  )$}
    	\KwIn{The master public key $\textit{mpk}$, a secret key $\textit{sk}_{p}$, a message including a transaction identity $\textit{ID}$ and its content $\textit{tx}_{\textit{ID}}$, a hash value $h$, a randomness $r$ and a message including a transaction identity $\textit{ID}$ and its content $\textit{tx'} _{\textit{ID}}$.}
        \KwOut{A randomness $r'$ and a signature $\sigma'$.}
        Generate a randomness $r'$:
        \begin{align*} 
        r^{\prime} \leftarrow \mathsf{CH}.\textit{Adapt}\left(\textit{s k}_{h},\left(\textit{ID}, \textit{tx}_{\textit{ID}}\right), h, r,\left(\textit{ID}, \textit{tx}_{\textit{ID}}^{\prime}\right)\right);
        \end{align*}\\
        Generate a signature: $\sigma' = g^{(1/\textit{sk}_{p}'+h)}.$\\
        \textbf{return} $r'$ and signature $\sigma'$.
    \end{algorithm}

    \end{itemize}

    \item[4)] \textbf{Transaction Verification:} \Cref{fig:Transaction Verification} shows the \textsf{Epass} transaction verification. Users can create two different types of blockchain transactions, as was previously stated. Therefore, we consider two types of verification at this stage: immutable and redactable transaction verification.
    
    \begin{figure}[htpb!]
        \centering
        \includegraphics[width=0.48\textwidth]{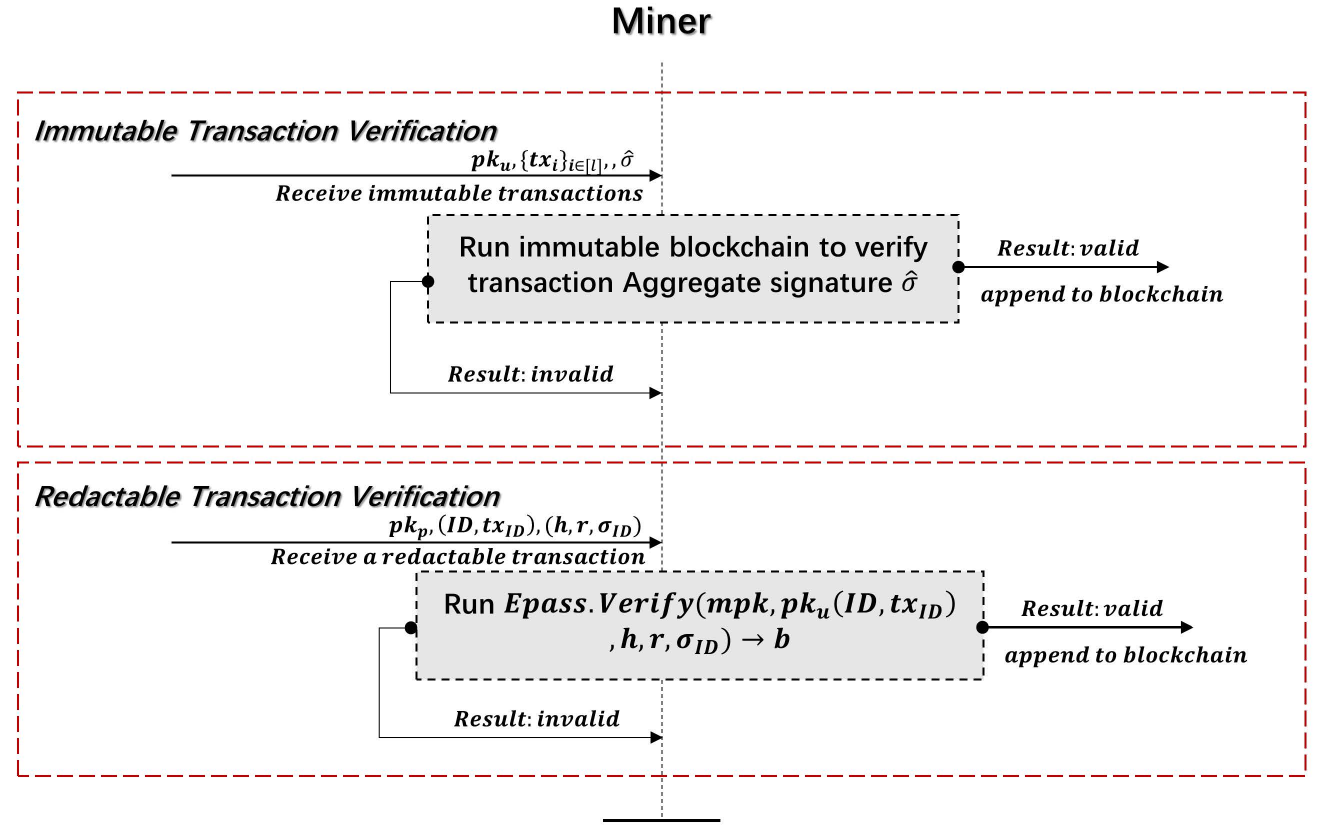}
        \caption{Transaction Verification}\label{fig:Transaction Verification}
    \end{figure}

    \begin{itemize}
        \item \textbf{Immutable transaction verification:} The miner receives a set of immutable transactions containing the user's public key $\textit{pk}_u$, the transaction set $ \{ \textit{tx}_{i}   \} _{i\in  [ l  ] }$, and the aggregated signature $\hat{\sigma}$. The immutable blockchain verification procedure is used by the miner to validate the transaction. If the validation proceeds correctly, the miner accepts the related transaction and adds it to the block; if not, the transaction is rejected.
        
        \item \textbf{Redactable transaction verification:} After receiving a redactable transaction containing the user's public key $pk_u$, transaction $ID, tx_{ID}$, and signature $h, r, \sigma_{\textit{ID}}$, the miner verifies the transaction by running $\textsf{Epass}.\textit{Verify}  ( \textit{mpk}, \textit{pk}_{u},  ( \textit{ID}, \textit{tx} _{\textit{ID}}  ), h, r, \sigma_{\textit{ID}}  )$ algorithm to verify the transaction, as shown in Algorithm 6. 
        The miner will update the local copy of the transaction if the validation is successful; otherwise, the miner will reject the transaction.
    \end{itemize}
    \begin{algorithm}\label{algr10}
        \caption{$\textsf{Epass}.\textit{Verify}  ( \textit{mpk}, \textit{pk}_{u}, \textit{pk}_p,  ( \textit{ID}, \textit{tx}_{\textit{ID}}  ),$\\$ h, r', \sigma', \widehat{\sigma}  )$}
        \KwIn{The transaction identity $\textit{ID}$ and its content $\textit{tx}_{\textit{ID}}$, the master public key $\textit{mpk}$, a public key $k_{u}$, a hash value $h$, a randomness $r$, and a signature $\sigma_{\textit{ID}} $.}
        \KwOut{A a decision bit $b\in \left \{ 0,1 \right \}$.}
        \eIf{the following conditions are true} 
        {$ \mathsf{CH.Verify}\left(\textit{pk}_{h},\left(\textit{ID}, \textit{tx}_{\textit{ID}}\right), h, r' \right)=1$\\
        $e(\widehat{\sigma}, \prod_{i=0}^{\ell}(g^{\alpha^{i}})^{\delta _{i}}) \stackrel{?}{=} e(g, g)$\\
        $e(\sigma', g^{i} g^{h'})\stackrel{?}{=}e(g, g)$\\
        {\textbf{return} $b=1$}}
        {\textbf{return} $b=0$}
    \end{algorithm}

\end{itemize}

\section{Security Analysis}\label{sa}
In this section, we demonstrate the security of the encryption algorithm and the unforgeability of the signature algorithm in \textsf {Epass}. Specifically, we analyze that \textsf{Epass} is \textbf{IND-CPA} and \textbf{EU-CMA} secure respectively.

\begin{theorem}
    \textsf{Epass} is \textbf{IND-CPA} secure, if the BDH assumption holds.
\end{theorem}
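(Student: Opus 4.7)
The plan is a reduction from the IND-CPA game to the Bilinear Diffie--Hellman problem in the random oracle model, treating $H_1$ and $H_2$ as programmable oracles. Given a BDH instance $(g, ag, bg, cg)$, I build a simulator $\mathcal{B}$ that embeds $a$ as the server secret $\alpha'$ and $c$ as the challenge randomness $r_0$, while using $H_1$ to implant $bg$ at a guessed challenge time instant. The crucial observation is that the mask $K = \hat{e}(g, H_1(T))^{r_0 s \alpha'}$ that hides $\textit{sk}_h$ in the challenge ciphertext becomes $\hat{e}(g,g)^{abcs}$ under this embedding; since the user's timed-release secret $s$ is generated by $\mathcal{B}$ itself, raising any candidate $K$ to the power $1/s$ recovers $\hat{e}(g,g)^{abc}$.

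For setup and oracle simulation, $\mathcal{B}$ runs \textsf{Epass}.\textit{Setup} and \textsf{Epass}.\textit{UserKeyGen} honestly, except it publishes $\textit{pk}_S = ag$ in place of a freshly generated $\alpha' g$, and computes $s \cdot \alpha' g = s \cdot (ag)$ for $\textit{pk}_{\textit{tre}}$ without knowing $a$. For each fresh $H_1$ query on $T$, $\mathcal{B}$ flips a biased coin: with small probability $\delta$ it programs $H_1(T) = bg$ and marks $T$ as blinded, otherwise it samples $r_T \leftarrow \mathbb{Z}_p^*$, sets $H_1(T) = r_T g$, and records $r_T$. Phase~1 TIK queries for a blinded $T$ force an abort; otherwise $\mathcal{B}$ replies $k_T = r_T \cdot (ag)$, which matches $\alpha' H_1(T)$ by construction. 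The $H_2$ oracle is handled lazily with a table, returning fresh random strings and logging every queried pre-image.

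On a challenge input $(m_0, m_1, [t_0, t_1])$ the simulator aborts unless at least one previously blinded instant $T^*$ lies inside $[t_0, t_1]$; it then forms $C^* = \langle cg, \zeta \rangle$ for a uniformly random $\zeta$. This is a well-distributed ciphertext under $r_0 = c$, and the correct pad would be $H_2(K^*)$ for $K^* = \hat{e}(g,g)^{abcs}$. Because $H_2$ is a random oracle, $\mathcal{A}$'s advantage is upper bounded by the probability it queries $H_2$ at $K^*$; when $\mathcal{A}$ halts, $\mathcal{B}$ picks a random entry $K$ from the $H_2$ table and outputs $K^{1/s}$, which equals $\hat{e}(g,g)^{abc}$ whenever the guess is correct. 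A standard accounting then gives a BDH advantage at least $\delta (1-\delta)^{q_T} / q_{H_2}$ times $\mathcal{A}$'s IND-CPA advantage.

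The main obstacle will be calibrating the embedding probability $\delta$ so that (i) $\mathcal{B}$ answers every TIK extraction query without aborting, yet (ii) the adversary's declared decryption interval covers at least one blinded instant. A Coron-style analysis, or equivalently programming $H_1(T^*)$ lazily only after $\mathcal{A}$ commits to $[t_0, t_1]$ (provided no prior TIK query landed in that interval), should bring the loss down to roughly $1/(q_{H_1}+1)$. A secondary subtlety is checking that the simulated $\textit{pk}_{\textit{tre}} = (sg, s \cdot ag)$ still satisfies the internal pairing test $\hat{e}(sg, \alpha' g) = \hat{e}(g, s \alpha' g)$ invoked inside \textsf{Epass}.\textit{TrCreat}; this holds identically by bilinearity, so no additional simulation step is needed there.
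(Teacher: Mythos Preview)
Your proposal is correct and, in fact, considerably more detailed than the paper's own argument. The paper's proof of Theorem~1 is little more than a restatement of the game structure: it says the simulator ``replaces the master secret key with $(\textit{pp}_{\textsf{DS}},\alpha_0)$'' where $\alpha_0$ is unknown to the simulator, then writes out the honest encryption steps in the Challenge phase, and concludes directly with the bound $\mathrm{Adv}_{\mathcal{A}} \le \tfrac{q-1}{2q}\,\epsilon_{\mathrm{BDH}}$. It never specifies how the BDH instance is embedded, how $H_1$ and $H_2$ are programmed, how TIK extraction queries are answered without the server secret, or how a BDH solution is extracted from the adversary's transcript. In other words, the reduction is asserted rather than constructed.

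Your plan supplies exactly the missing machinery, via a standard Boneh--Franklin style embedding: setting $\alpha' = a$, $r_0 = c$, planting $bg$ through $H_1$ with a Coron-style biased coin, answering non-blinded TIK queries as $r_T\cdot(ag)$, and harvesting $\hat{e}(g,g)^{abc}$ from the $H_2$ query list by raising to $1/s$. This is the correct way to substantiate the claimed bound, and the observation that $\mathcal{B}$ knows $s$ (so the extra exponent can be stripped) is the right crux. What the paper's terse version buys is brevity; what your version buys is an actual proof. Your identified subtleties---calibrating $\delta$ and verifying that the simulated $\textit{pk}_{\textit{tre}}$ passes the pairing sanity check inside \textsf{Epass}.\textit{TrCreat}---are genuine and your proposed resolutions are sound. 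One small point: in the construction \textsf{Epass}.\textit{TrCreat} encrypts to a single instant $T$ rather than an interval, so you may simplify the Challenge handling by guessing a single $T^*$ among the $H_1$ queries rather than requiring a blinded instant to fall inside $[t_0,t_1]$; either formulation works and the tightness loss is of the same order.
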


\begin{proof}
    We will prove that \textsf{Epass} is safe under the \textbf{IND-CPA} model by playing a game between the PPT adversary $\mathcal{A}$ and the simulator $\mathcal{S}$. 

    \noindent{\bf Setup}: $\mathcal{S}$ replaces the master secret key with $( \textit{pp}_\textsf{DS}, \alpha_0 )$, where $\alpha_0$ is chosen randomly from $\mathbb{Z}_{p}^{*}$ and is unknown to $\mathcal{S}$. Here, both $\alpha$ and $\alpha_0$ are random for $\mathcal{A}$, so $\mathcal{A}$ cannot distinguish between the real-world master secret key and the simulated one.

    \noindent{\bf Phase 1}: At any time $t\in T$, $\mathcal{A}$ can adaptively issue TIK extraction queries, and oracle will use $k_t$ as the response to each query.

    \noindent{\bf Challenge}: $\mathcal{A}$ sends $m_0, m_1 \in \mathcal{M}$ and $[t_{0}, t_{1}] \subseteq T$ to $\mathcal{S}$. $\mathcal{S}$ chooses a random bit $b$ and encrypts the message to get the challenge ciphertext $c'$. The simulation proceeds as follows: 
    \begin{itemize}
        \item [1.] $\mathcal{S}$ random choose $r_0^{*}$, and compute $r_0^{*}g$ and $r_0^{*}s\alpha'g$.
        \item [2.] $\mathcal{S}$ compute 
        \begin{align*} 
            K=\hat{e}\left(r_0^{*} s\alpha 'g, H_{1}(T)\right)=\hat{e}\left(g, H_{1}(T)\right)^{r_0^{*} s\alpha '}.
        \end{align*}
        \item [3.] $\mathcal{S}$ compute the ciphertext: 
        \begin{align*} 
            c'=<r_0^{*}g, sk_h \oplus H_{2}(K)>.
        \end{align*}
    \end{itemize}

    Finally, $\mathcal{S}$ sends the challenged ciphertext $c'$ to $\mathcal{A}$.

    \noindent{\bf Phase 2}: $\mathcal{A}$ continues to query the TIK extract oracle using the same restrictions as the Challenge phase.

    \noindent{\bf Guess}: $\mathcal{A}$ outputs its guess $b'$ for $b$. When $b' = b$, $\mathcal{A}$ successfully wins the game.

    Since $b'$ is a random guess of $\mathcal{A}$, there is no advantage to $\mathcal{S}$ from $\mathcal{A}$'s guesses, so we can obtain the advantage that $\mathcal{A}$ undermines the security of our proposed scheme, 
    \begin{align*} 
        \mathrm{Adv}_{\mathcal{A}} \leq \frac{q-1}{2 q} \epsilon_{\mathrm{BDH}}.
    \end{align*}

\end{proof}

\begin{theorem}
    \textsf{Epass} is \textbf{EU-CMA} secure, which has the following advantage: $  \operatorname{Adv}_{\mathcal{A}, \mathcal{\textsf{Epass}}}^{\textbf{EU-CMA}}(1^{\lambda})=   \operatorname{Adv}_{\mathcal{A}, \textsf{DS}}^{\textbf{EU-CMA}}(1^{\lambda}) $ , if the underlying digital signature $\textbf{DS}$ applied is \textbf{EU-CMA} secure.
\end{theorem}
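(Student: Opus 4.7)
The plan is to give a tight black-box reduction: any PPT adversary $\mathcal{A}$ that wins the \textbf{EU-CMA} game against \textsf{Epass} with advantage $\varepsilon$ is turned into a PPT adversary $\mathcal{B}$ that wins the \textbf{EU-CMA} game against $\mathsf{DS}$ with the same advantage $\varepsilon$, yielding the claimed equality $\operatorname{Adv}_{\mathcal{A},\textsf{Epass}}^{\textbf{EU-CMA}}(1^{\lambda}) = \operatorname{Adv}_{\mathcal{B},\textsf{DS}}^{\textbf{EU-CMA}}(1^{\lambda})$.

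First, $\mathcal{B}$ receives $\textit{pp}_{\mathsf{DS}}$ and access to the $\mathsf{DS}$ oracles $\mathcal{O}_{\textit{KeyGen}}$, $\mathcal{O}_{\textit{Sign}}$, $\mathcal{O}_{\textit{Corrupt}}$. I would have $\mathcal{B}$ honestly sample the chameleon hash parameters $\textit{pp}_{\mathsf{CH}}$, the message hash key $\textit{hk}$, and the master key pair $(\textit{msk},\textit{mpk})$ used by \textsf{Epass}, then forward the resulting $\textit{pp}$ and $\textit{mpk}$ to $\mathcal{A}$. Note that $\mathcal{B}$ knows every secret of \textsf{Epass} except the signing secrets, which it outsources to its own challenger.

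Next I would describe the simulation of every oracle in the \textsf{Epass} game. The three key-generation oracles $\mathcal{O}_{\textit{KeyGen}_{u}}, \mathcal{O}_{\textit{KeyGen}_{p}}, \mathcal{O}_{\textit{KeyGen}_{S}}$ are implemented by calling $\mathcal{O}_{\textit{KeyGen}}$ for the signing component and locally sampling the remaining components (chameleon hash keys, timed-release keys, local verification keys) so that the joint distribution matches the real scheme. The three corruption oracles forward the identifier to $\mathcal{O}_{\textit{Corrupt}}$ and append the locally-stored auxiliary secrets. For $\mathcal{O}_{\textit{TrCreat}}$ and $\mathcal{O}_{\textit{Adapt}}$, $\mathcal{B}$ computes the chameleon hash $h$ from the queried message and randomness (possible because $\mathcal{B}$ knows $\textit{sk}_{h}$), obtains $\sigma = g^{1/(\textit{sk}_u + h)}$ by a single call to $\mathcal{O}_{\textit{Sign}}$ on the message $h$, and assembles the ciphertext $C$, the randomness $r$, and any adapted $(r', \sigma')$ from public data. $\mathcal{O}_{\textit{Aggregate}}$ is answered by honestly running the public aggregation procedure on already-obtained signatures. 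Because every entry $(\textit{pk}, m)$ recorded in \textsf{Epass}'s $\mathcal{L}_{\textit{sign}}$ corresponds to exactly one entry $(\textit{pk}, h)$ in the $\mathsf{DS}$ list maintained by $\mathcal{B}$'s challenger, and similarly for the corrupt lists, the two experiments are perfectly synchronized. When $\mathcal{A}$ halts with a winning tuple $(\textit{pk}^{*}, m^{*}, \sigma^{*})$, $\mathcal{B}$ computes $h^{*} = \mathsf{CH}.\textit{Hash}(\textit{pk}_{h}, m^{*}, r^{*})$ and submits $(\textit{pk}^{*}, h^{*}, \sigma^{*})$ as its own forgery, which passes $\mathsf{DS}.\textit{Verify}$ whenever $\mathcal{A}$'s output passed $\textsf{Epass}.\textit{Verify}$.

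The main obstacle is the aggregate-forgery case: if $\mathcal{A}$'s winning output is the aggregated signature $\hat{\sigma}^{*}$ rather than a single $\sigma^{*}$, $\mathcal{B}$ must disaggregate it into a fresh single $\mathsf{DS}$ forgery on an unqueried $(\textit{pk}, h)$ pair. My plan to overcome this is to use the Lagrange-style coefficients $\gamma_{i}$ of Algorithm 2 together with the batch-verification relation of Algorithm 6: at most one aggregated index $j$ corresponds to an unqueried $(\textit{pk}_{j}, h_{j})$, so all other factors $\sigma_{i}^{\gamma_{i}}$ in $\hat{\sigma}^{*} = \prod_{i} \sigma_{i}^{\gamma_{i}}$ are known to $\mathcal{B}$, letting it isolate $\sigma_{j}^{\gamma_{j}}$ and extract the root $\sigma_{j}$ as the required $\mathsf{DS}$ forgery. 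Showing that this algebraic step introduces no loss, so that the reduction remains tight with advantage equality rather than an inequality, is the delicate part of the argument; everything else is a syntactic check that the simulated transcript is identically distributed to the real \textsf{Epass} transcript.
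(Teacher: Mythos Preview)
Your high-level reduction (embed the $\mathsf{DS}$ challenger into the signing component of every \textsf{Epass} key, simulate the remaining components locally, and forward the adversary's forgery) is exactly what the paper does. Two points of divergence are worth noting.

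\textbf{The paper's case split.} The paper's output analysis distinguishes two winning configurations that you collapsed into one: (i) the forged tuple carries an uncorrupted user key $\textit{pk}_u^{*}$ with $(\textit{pk}_u^{*},(\textit{ID}^{*},\textit{tx}_{\textit{ID}^{*}}),h^{*},r^{*},\sigma_{\textit{ID}^{*}})\notin\mathcal{L}_{\mathrm{h}}$, in which case $\mathcal{B}$ submits $(\textit{pk}_u^{*},(\textit{ID}^{*},r^{*}),\sigma_{\textit{ID}^{*}})$; or (ii) the tuple is an adapted transaction with an uncorrupted provider key and is absent from $\mathcal{L}_{\textit{apt}}$, in which case $\mathcal{B}$ submits the provider-side signature $\sigma'$. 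Your write-up should reflect both branches, since the forgery may target either the user's or the provider's signing key.

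\textbf{Your ``main obstacle'' is not one.} In the paper's formulation the adversary's winning output is the tuple $(\textit{pk}^{*},(\textit{ID}^{*},\textit{tx}_{\textit{ID}^{*}}),h^{*},r^{*},\sigma_{\textit{ID}^{*}},\hat{\sigma}^{*})$, and $\textsf{Epass}.\textit{Verify}$ (Algorithm~6) checks the individual signature $\sigma'$ in addition to the aggregate. Hence $\mathcal{B}$ already holds a single valid $\mathsf{DS}$ signature $\sigma_{\textit{ID}^{*}}$ (or $\sigma'$) and simply forwards it; no disaggregation of $\hat{\sigma}^{*}$ is required, and the tightness question you flag as ``the delicate part'' evaporates. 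Your proposed disaggregation step also contains a genuine gap: the assertion that ``at most one aggregated index $j$ corresponds to an unqueried $(\textit{pk}_j,h_j)$'' is unjustified, and if two or more factors are fresh the algebraic isolation you describe fails outright. Fortunately the paper's route avoids this entirely.
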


\begin{proof}
    The standard security concept of our signature scheme is existential unforgeability under the choice message attack (\textbf{EU-CMA}) \cite{goldwasser1988digital}, which means that even if access to the signature oracle is gained, it is difficult to output a valid signature for a message $m$ that has never been requested to the signature oracle. \textsf{Epass} is \textbf{EU-CMA} secure if no probabilistic polynomial-time adversary $\mathcal{A}$ can win with non-negligible probability. We construct simulator $\mathcal{B}$, which has a non-negligible probability to break the underlying signature scheme $\mathcal{C}$. The security game is defined as follows.

    \noindent{\bf Setup}: 
    
    $\mathcal{B}$ sends $\lambda$ to $\mathcal{C}$ after receiving the security parameter $\lambda \in \mathbb{N}$, and $\mathcal{C}$ returns a public parameter $\textit{pp}_\textsf{DS}$. $\mathcal{B}$ sends $\lambda$ to $\mathcal{C}$ after receiving the security parameter $\lambda \in \mathbb{N}$, and $\mathcal{C}$ returns a public parameter $\textit{pp}_\textsf{DS}$. $\mathcal{B}$ initializes the public parameter $\textit{pp}_\textsf{CH} \gets \textsf{CH}.\textit{Setup} \left ( 1^{\lambda } \right )$, samples the public parameter of the message hash $hk\gets HGen\left (1^{\lambda} \right)$, and $\mathcal{C}$ gives $\mathcal{B}$ the public key $\textit{pk}_{\alpha }$. Next, $\mathcal{B}$ sets the public parameters $\textit{pp} = (\textit{pp}_\textsf{DS}, \textit{pp}_\textsf{CH}, \textit{hk})$ and the master public key $\textit{mpk} = (\textit{pk}_{\alpha }, H)$. Finally, $\mathcal{B}$ returns $(\textit{pp}, \textit{mpk})$ to $\mathcal{A}$ and gets lists:
    \begin{align*} 
        \mathcal{L}_{\mathrm{KG}_{\mathrm{p}}}, \mathcal{L}_{\textit{corr}_{\mathrm{p}}}, \mathcal{L}_{\mathrm{KG}_{\mathrm{u}}}, \mathcal{L}_{\textit{corr}_{\mathrm{u}}}, \mathcal{L}_{\mathrm{KG}_{\mathrm{S}}}, \mathcal{L}_{\textit{corr}_{\mathrm{S}}}, \mathcal{L}_{\mathrm{h}}, \mathcal{L}_{\textit{agg}}, \mathcal{L}_{\textit{apt}} \leftarrow \emptyset.
    \end{align*}

    \noindent{\bf Queries}: 
    
    In this phase, $\mathcal{A}$ adaptively queries the following oracle.

    \noindent$\mathcal{O}_{\textit{KeyGen}_{p}}(\textit{sk}_{p}, \textit{pk}_{p})$: $\mathcal{A}$ is allowed to query the provider key generation oracle. $\mathcal{C}$ provides $\{ g^{{\mu} ^{i} } \}_{i\in [B]}$ for $\mathcal{B}$. The algorithm returns a secret key $sk_p = ( \textit{pp}_\textsf{DS}, \textit{hk}, \cdot)$ and a public key $\textit{pk}_p = ( \textit{pp}_\textsf{DS}, \textit{hk}, \{ g^{{\mu} ^{i} } \}_{i\in  [B]})$. $\mathcal{B}$ updates the list 
    \begin{align*} 
    \mathcal{L}_{\mathrm{KG}_{\mathrm{p}}} \leftarrow \mathcal{L}_{\mathrm{KG}_{\mathrm{p}}} \cup\{(\textit{sk}_{p}, \textit{pk}_{p})\}
    \end{align*}
    and returns $\textit{pk}_p$ to $\mathcal{A}$.

    \noindent$\mathcal{O}_{\textit{Corrupt}_{p}}(\textit{pk}_p)$: $\mathcal{A}$ can query the provider corrupt oracle using the message on $\textit{pk}_p$. $\mathcal{B}$ sends the public key $\textit{pk}_p$ to $\mathcal{C}$ and gets the corresponding private key $\textit{sk}_p$. After that, $\mathcal{B}$ updates the list 
    \begin{align*}
    \mathcal{L}_{\textit{corr}_{\mathrm{p}}} \leftarrow \mathcal{L}_{\textit{corr}_{\mathrm{p}}} \cup\{\textit{pk}_p\}
    \end{align*}
    and returns $\textit{sk}_p$ to $\mathcal{A}$.

    \noindent$\mathcal{O}_{\textit{KeyGen}_{u}}(\textit{pk}_u)$: $\mathcal{A}$ is allowed to query the user key generation oracle. $\mathcal{B}$ receives $\textit{pk}_{\beta} $, $\textit{pk}_{h}$ and $\textit{pk}_{\textit{tre}}$ from $\mathcal{C}$ and immediately generates a local verification key $\textit{pk}_{\textit{local}} = \left(\Pi, \textit{hk}, g^{\alpha}\right)$. The algorithm returns a private key $\textit{sk}_u = (\cdot, \cdot)$ and a public key $\textit{pk}_u = \left (\textit{pk}_{\beta}, \textit{pk}_{h}, \textit{pk}_{\textit{tre}}, \textit{pk}_{\textit{local}} \right )$. Finally, $\mathcal{B}$ updates the list 
    \begin{align*}
    \mathcal{L}_{\textit{KG}_{\mathrm{u}}} \leftarrow \mathcal{L}_{\textit{KG}_{\mathrm{u}}} \cup\{\textit{pk}_u\}
    \end{align*}
    and returns $\textit{pk}_u$ to $\mathcal{A}$.

    \noindent$\mathcal{O}_{\textit{Corrupt}_{u}}(\textit{pk}_u)$: $\mathcal{A}$ can query the user corrupt oracle using the message on $\textit{pk}_u$. $\mathcal{B}$ sends $\textit{pk}_u$ to $\mathcal{C}$ and receives the corresponding $\textit{sk}_u$. $\mathcal{B}$ updates list
    \begin{align*}
    \mathcal{L}_{\textit{corr}_{\mathrm{u}}} \leftarrow \mathcal{L}_{\textit{corr}_{\mathrm{u}}} \cup\{\textit{pk}_u\}
    \end{align*}
    and returns $\textit{sk}_u$ to $\mathcal{A}$.

    \noindent$\mathcal{O}_{\textit{KeyGen}_{S}}(\textit{pk}_S)$: $\mathcal{A}$ is allowed to query the server key generation oracle. $\mathcal{B}$ receives $\textit{pk}_{S}$ from $\mathcal{C}$. The algorithm returns a private key $\textit{sk}_{S} = (\cdot)$ and a public key $\textit{pk}_{S}$. Finally, $\mathcal{B}$ updates the list 
    \begin{align*}
    \mathcal{L}_{\textit{KG}_{\mathrm{S}}} \leftarrow \mathcal{L}_{\textit{KG}_{\mathrm{S}}} \cup\{\textit{pk}_S\}
    \end{align*}
    and returns $\textit{pk}_S$ to $\mathcal{A}$.

    \noindent$\mathcal{O}_{\textit{Corrupt}_{S}}(\textit{pk}_S)$: $\mathcal{A}$ can query the server corrupt oracle using the message on $\textit{pk}_S$. $\mathcal{B}$ sends $\textit{pk}_S$ to $\mathcal{C}$ and receives the corresponding $\textit{sk}_S$. $\mathcal{B}$ updates list
    \begin{align*}
    \mathcal{L}_{\textit{corr}_{\mathrm{S}}} \leftarrow \mathcal{L}_{\textit{corr}_{\mathrm{S}}} \cup\{\textit{pk}_S\}
    \end{align*}
    and returns $\textit{sk}_S$ to $\mathcal{A}$.

    \noindent$\mathcal{O}_{\textit{TrCreat}}(\left(\textit{pk}_{u},\left(\textit{ID}, \textit{tx}_{\textit{ID}}\right), h, r, \sigma_{\textit{ID}}\right))$: $\mathcal{A}$ can query the hash oracle using the message on $\textit{pk}_u$, transaction $\textit{tx}_{\textit{ID}}$ and its corresponding identity $\textit{ID}$, hash value $h$, random number $r$ and signature $\sigma_{\textit{ID}}$. $\mathcal{B}$ selects a random number $r$ and computes the hash $h\gets \textsf{CH}.\textit{Hash}\left ( \textit{pk}_{h}, (\textit{ID}, \textit{tx}_{\textit{ID}}), r \right )$. Next, if $\textit{pk}_{u} \in \mathcal{L}_{\textit{corr}_{\mathrm{u}}}$, a signature $\sigma _{\textit{ID}} = g^{(1/\textit{sk}_{u}+h)}$ is generated; otherwise, $\mathcal{B}$ sends the public key $\textit{pk}_{u}$ and the message $(\textit{ID},r)$ to $\mathcal{C}$, and then gets $\sigma _{\textit{ID}}$ returned by $\mathcal{C}$. Finally, $\mathcal{B}$ updates the list 
    \begin{align*}
    \mathcal{L}_{\mathrm{h}} \leftarrow \mathcal{L}_{\mathrm{h}} \cup\left\{\left(\textit{pk}_{u},\left(\textit{ID}, \textit{tx}_{\textit{ID}}\right), h, r, \sigma_{\textit{ID}}\right)\right\}
    \end{align*}
    and returns $(h,r, \sigma _{\textit{ID}})$ to $\mathcal{A}$.

    \noindent$\mathcal{O}_{\textit{Aggregate}}( \textit{pk}_{u},  \{ ( ( \textit{ID}_{i} , r_{i}  ), \sigma_{i} ) \} _{i} )$: $\mathcal{A}$ can query the aggregate oracle using the message on $\textit{pk}_u$, and a transaction set $\{ ( ( \textit{ID}_{i} , r_{i}  ), \sigma_{i} ) \} _{i}$. $\mathcal{B}$ compute $\gamma _{i}=\frac{1}{\prod_{i \neq j}\left(x_{i}-x_{j}\right)}$, and $\hat{\sigma}=\prod_{i} \sigma_{i}^{\gamma i}$ if $p k_{u} \in \mathcal{L}_{\textit{corr}_{\mathrm{u}}}$, otherwise, $\mathcal{B}$ sends the public key $\textit{pk}_{u}$ and $\{ ( ( \textit{ID}_{i} , r_{i}  ), \sigma_{i} ) \} _{i}$ to $\mathcal{C}$, and then gets $\hat{\sigma }$ returned by $\mathcal{C}$. Finally, $\mathcal{B}$ updates the list
    \begin{align*}
        \mathcal{L}_{\mathrm{agg}} \leftarrow \mathcal{L}_{\mathrm{agg}} \cup\{\textit{pk}_{u},  \{ ( ( \textit{ID}_{i} , r_{i}  ), \sigma_{i} ) \} _{i}\}
    \end{align*}
    and returns $ \hat{\sigma }$ to $\mathcal{A}$.

    \noindent$\mathcal{O}_{\textit{Adapt}}( \textit{pk}_{p},\left(\textit{ID}, \textit{tx}_{\textit{ID}}\right), h, r, \sigma_{\textit{ID}},\left(\textit{ID}, \textit{tx}_{\textit{ID}}^{\prime}\right), r^{\prime}, \sigma_{\textit{ID}}^{\prime})$: $\mathcal{A}$ can query the hash oracle using the message on $\textit{pk}_u$, transaction $\textit{tx}_{\textit{ID}}$ and its corresponding identity $\textit{ID}$, hash value $h$, random number $r$, signature $\sigma_{\textit{ID}}$, transaction $\textit{tx}_{\textit{ID}}'$ and its corresponding identity $\textit{ID}$, new random number $r^{\prime}$, new signature $\sigma_{\textit{ID}}^{\prime}$. $\mathcal{B}$ generate a randomness $r^{\prime} \leftarrow \textsf{CH}.\textit{Adapt}(\textit{sk}_{h},(\textit{ID}, \textit{tx}_{\textit{ID}}), h, r,(\textit{ID}, \textit{tx}_{\textit{ID}}^{\prime}))$ and pick an index $i$ that is never used before to generate a signature $\sigma' = g^{(1/\textit{sk}_{p}^{'}+h)}$ if $\textit{sk}_p$ has been corrupted, such that $\textit{pk}_{p} \in \mathcal{L}_{\textit{corr}_{\mathrm{p}}}$; otherwise, $\mathcal{B}$ receives a signature $\sigma'$ form $\mathcal{C}$ after sending $\textit{pk}_{p}'$ and $(\textit{ID}, r')$ to $\mathcal{C}$. Finally, $\mathcal{B}$ updates the list 
    \begin{align*}
        \mathcal{L}_{\textit{apt}} \leftarrow \mathcal{L}_{\textit{apt}} \cup\left\{\left(\textit{pk}_{p},\left(\textit{ID}, \textit{tx}_{\textit{ID}}\right), h, r, \sigma_{\textit{ID}},\left(\textit{ID}, \textit{tx}_{\textit{ID}}^{\prime}\right), r^{\prime}, \sigma_{\textit{ID}}^{\prime}\right)\right\}
    \end{align*}
    and returns $r'$, $\sigma_{\textit{ID}}^{\prime}$ to $\mathcal{A}$.

    \noindent{\bf Output}: 

    $\mathcal{A}$ can be passed a tuple 
    \begin{align*}
    \left(pk^{*},(\textit{ID}^{*}, t x_{\textit{ID}^{*}}), h^{*}, r^{*}, \sigma_{\textit{ID}^{*}}, \widehat{\sigma}^{*}\right)
    \end{align*}
    as input to $\textsf{Epass}.\textit{Verify} (\textit{pk}^{*}, ( \textit{ID}, \textit{tx}_{\textit{ID}}  ), h^{*}, r^{*}, \sigma_{ \textit{ID}^{*}} ,\widehat{\sigma}^{*} )$.

\begin{itemize}
        \item $\mathcal{A}$ outputs a transaction that satisfies $\textit{pk}^{*}=\textit{pk}_{u}^{*} \wedge \textit{pk}_{u}^{*} \notin \mathcal{L}_{\textit{corr}_{u}} \wedge\left(\textit{pk}_{u}^{*},\left(\textit{ID}^{*}, \textit{tx}_{\textit{ID}^{*}}\right), h^{*}, r^{*}, \sigma_{\textit{ID}^{*}}\right) \notin \mathcal{L}_{\mathrm{h}} $ and is defined as a valid transaction, i.e., the transaction is uncorrupted and unqueried. Meanwhile, $\mathcal{A}$ can forge a signature $\sigma_{\textit{ID}}^{*}$ using $\textit{sk}_p$ to sign $(\textit{ID}^{*}, r^{*})$. And $\mathcal{B}$ breaks the signature scheme by sending the key $\textit{pk}_u$, the signature $\sigma_{\textit{ID}}^{*}$ and the message $(\textit{ID}^{*}, r^{*})$ to $\mathcal{C}$.
        \item $\mathcal{A}$ outputs a message unrelated to the adaption oracle and a redacted transaction, i.e. 
        \begin{align*}
            \textit{pk}^{*} &=\left(\textit{pk}_{u}^{*}, \textit{pk}_{p}^{*}\right) \wedge\left(\textit{pk}_{p}^{*}, \cdot, h^{*}, \cdot, \cdot,\left(\textit{ID}^{*}, \textit{tx}_{\textit{ID}^{*}}\right),r^{*}, \sigma_{\textit{ID}}\right),  
        \end{align*}
    and thus 
    \begin{align*}
         \textit{pk}^{*} &\notin \mathcal{L}_{\textit{apt}}. 
    \end{align*}
    Meanwhile, $\mathcal{A}$ can forge a signature $\sigma'$ using $\textit{sk}_m'$ to sign $(\textit{ID}^{*}, r^{*})$. And $\mathcal{B}$ breaks the signature scheme by sending the key $\textit{pk}_m'$, the signature $\sigma'$ and the message $(\textit{ID}^{*}, r^{*})$ to $\mathcal{C}$.
\end{itemize}
Therefore, through the above simulation process we can get the advantage 

\begin{align*}
    \operatorname{Adv}_{\mathcal{A}, \textsf{ Epass }}^{\textbf{EU-CMA}}\left(1^{\lambda}\right)=\operatorname{Adv}_{\mathcal{A}, \mathcal{D S}}^{\textbf{EU-CMA}}\left(1^{\lambda}\right).
\end{align*}

\end{proof}

    
    
    
    
    
    
    
    
\begin{table*}[t]
    \centering
    \caption{Experimental performances.}
    \label{table1}
    \begin{threeparttable}
    \begin{tabular}{|c|c|c|c|c|c|}
    \hline
    \multicolumn{2}{|c|}{\multirow{2}{*}{Algorithms}} & \multicolumn{4}{c|}{Time Cost in each algorithm(ms)\tnote{1} \tnote{2} ($k$ = 8 / 16 / 24)}\\
    \cline{3-6}
    
    \multicolumn{2}{|c|}{} & $u$ = 8 & $u$ = 16 & $u$ = 24 & $u$ = 32 \\
    \hline
    \multicolumn{2}{|c|}{Syetem Setup} & 149 / 156 / 161 & 151 / 163 / 164 & 162 / 171 / 174 & 164 / 174 / 176 \\ 
    \hline
    \multicolumn{2}{|c|}{User Key Generation} & 440 / 463 / 494 & 438 / 460 / 495 & 439 / 461 / 496 & 440 / 464 / 495 \\ 
    \hline
    \multicolumn{2}{|c|}{Provider Key Generation} & 15 / 14 / 14 & 14 / 13 / 15 & 15 / 16 / 18 & 15 / 17 / 16 \\ 
    \hline
    \multicolumn{2}{|c|}{Server Key Generation} & 32 / 30 / 32 & 29 / 30 / 31 & 29 / 29 / 28 & 31 / 30 / 31 \\ 
    \hline
    \end{tabular}
    \label{table_MAP}
    \begin{tablenotes}
        \footnotesize
        \item[1] $k$ represents the number of deferred transactions for users.
        \item[2] $u$ represents the number of user.
    \end{tablenotes}
\end{threeparttable}
\end{table*}

\begin{figure*}[tb]
    \centering
    
    \subfigure[\label{fig:a}]{
    \includegraphics[width=5.6cm]{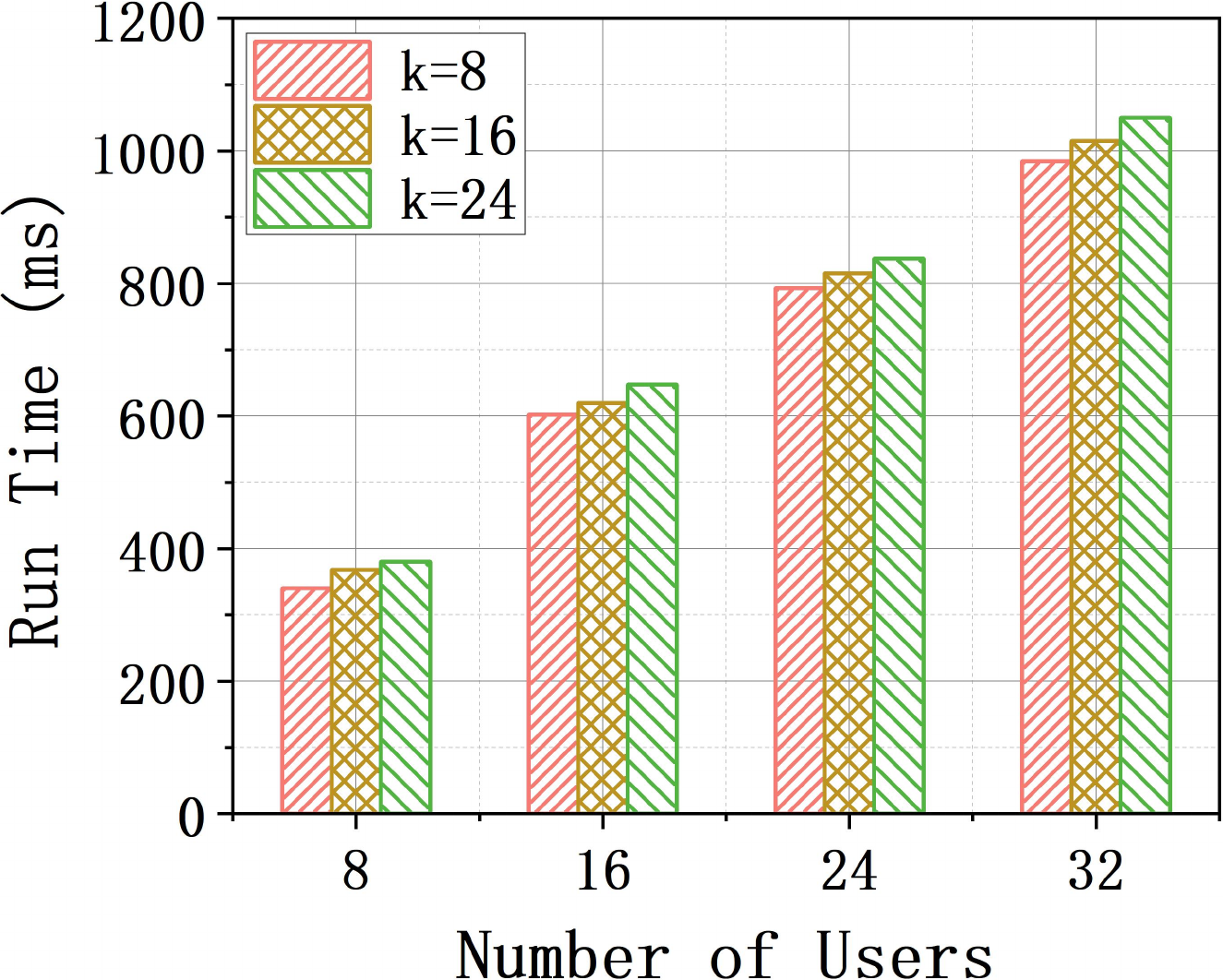}
    }
    \subfigure[\label{fig:b}]{
    \includegraphics[width=5.6cm]{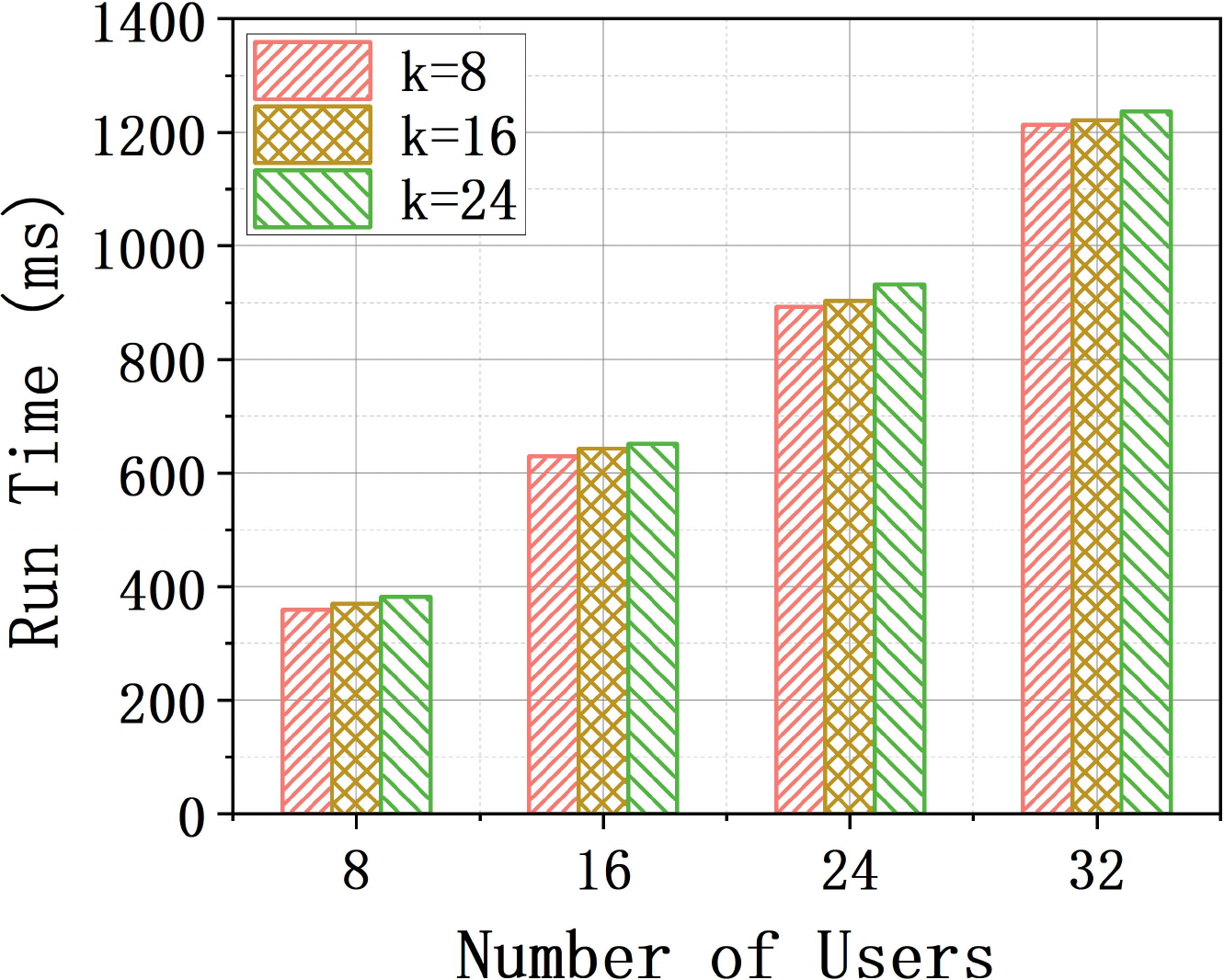}
    
    }
    \subfigure[\label{fig:c}]{
    \includegraphics[width=5.6cm]{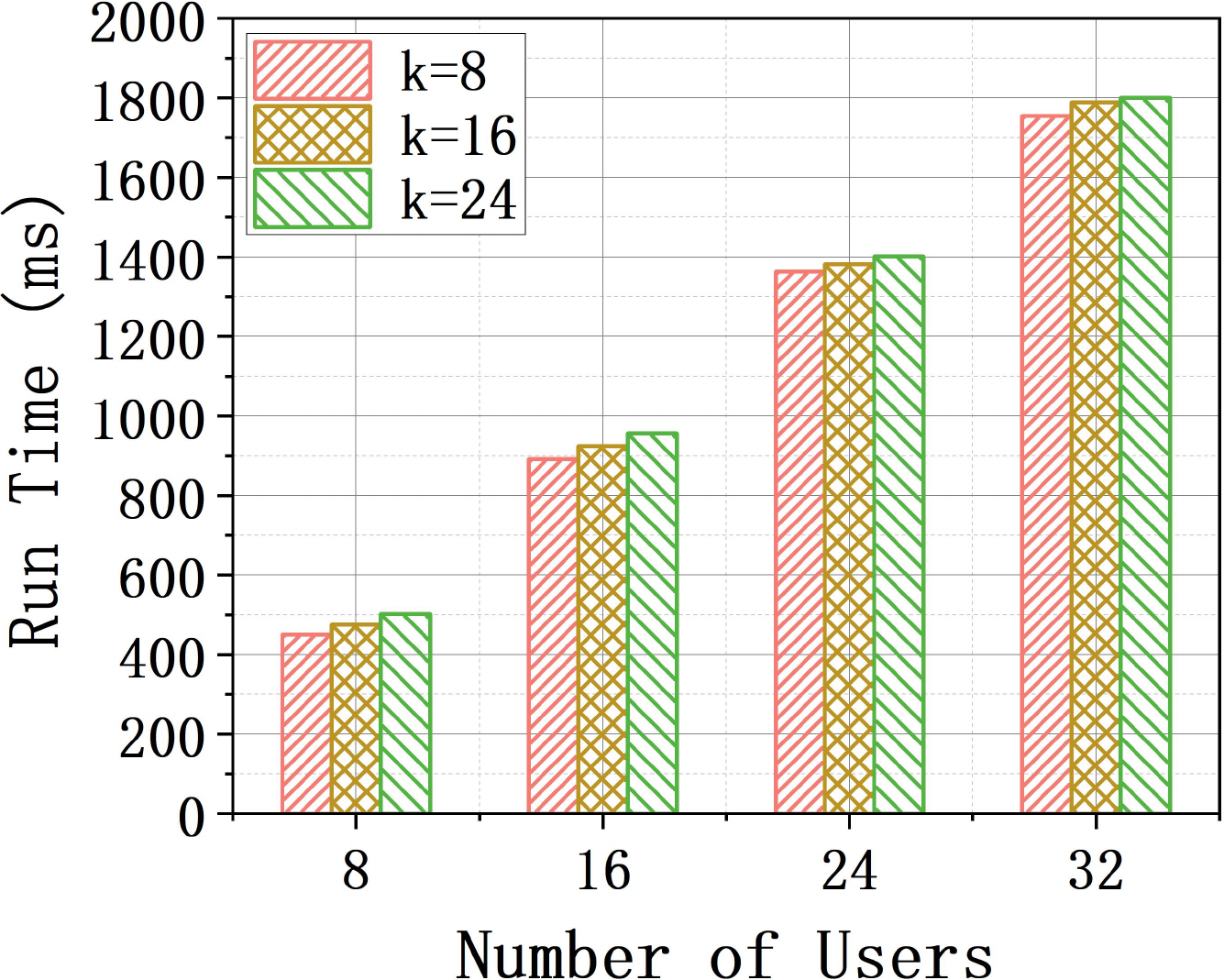}
    
    }
    
    \quad
    \subfigure[\label{fig:d}]{
    \includegraphics[width=5.6cm]{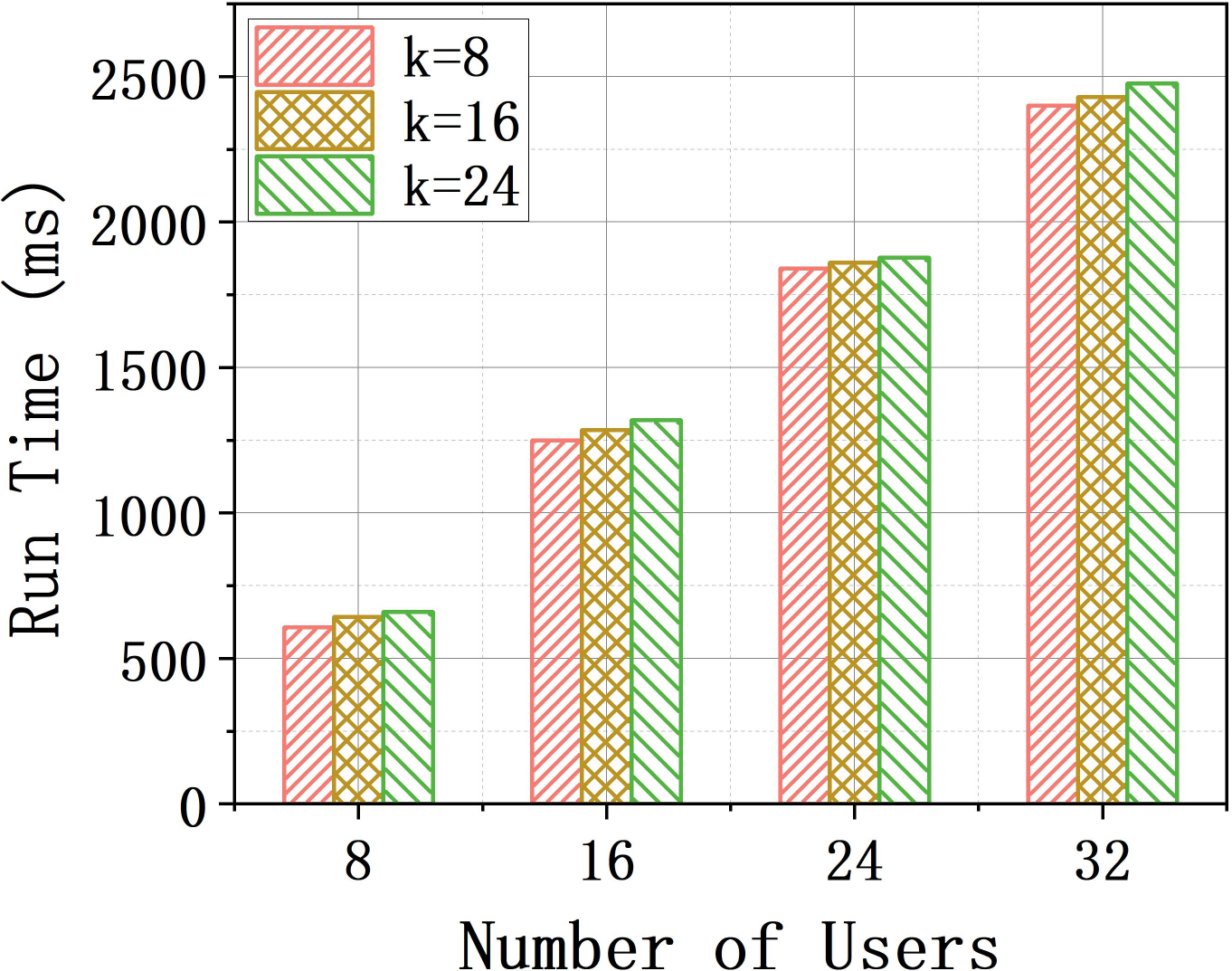}
    
    }
    \subfigure[\label{fig:e}]{
    \includegraphics[width=5.6cm]{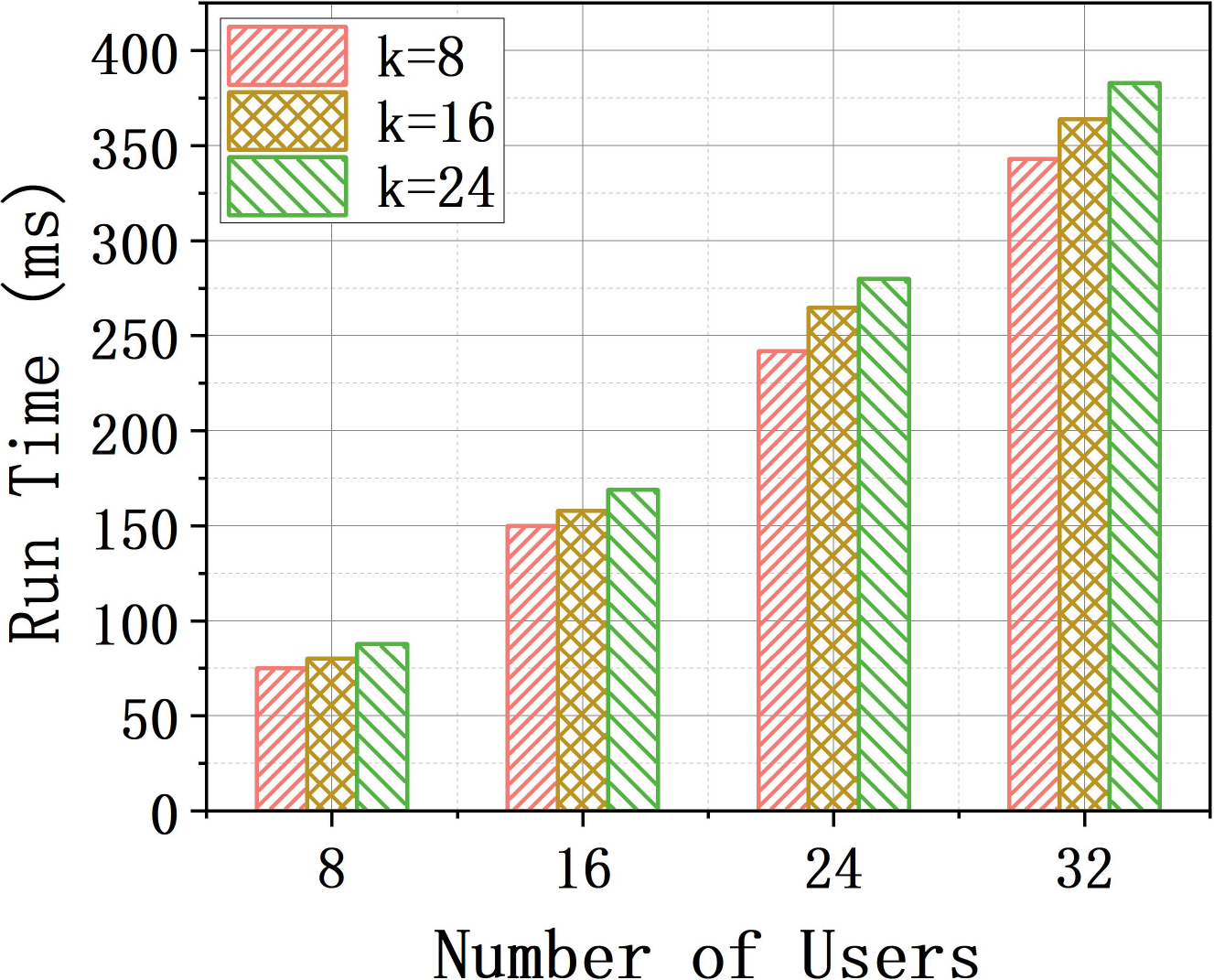}
    
    }
    \subfigure[\label{fig:f}]{
    \includegraphics[width=5.6cm]{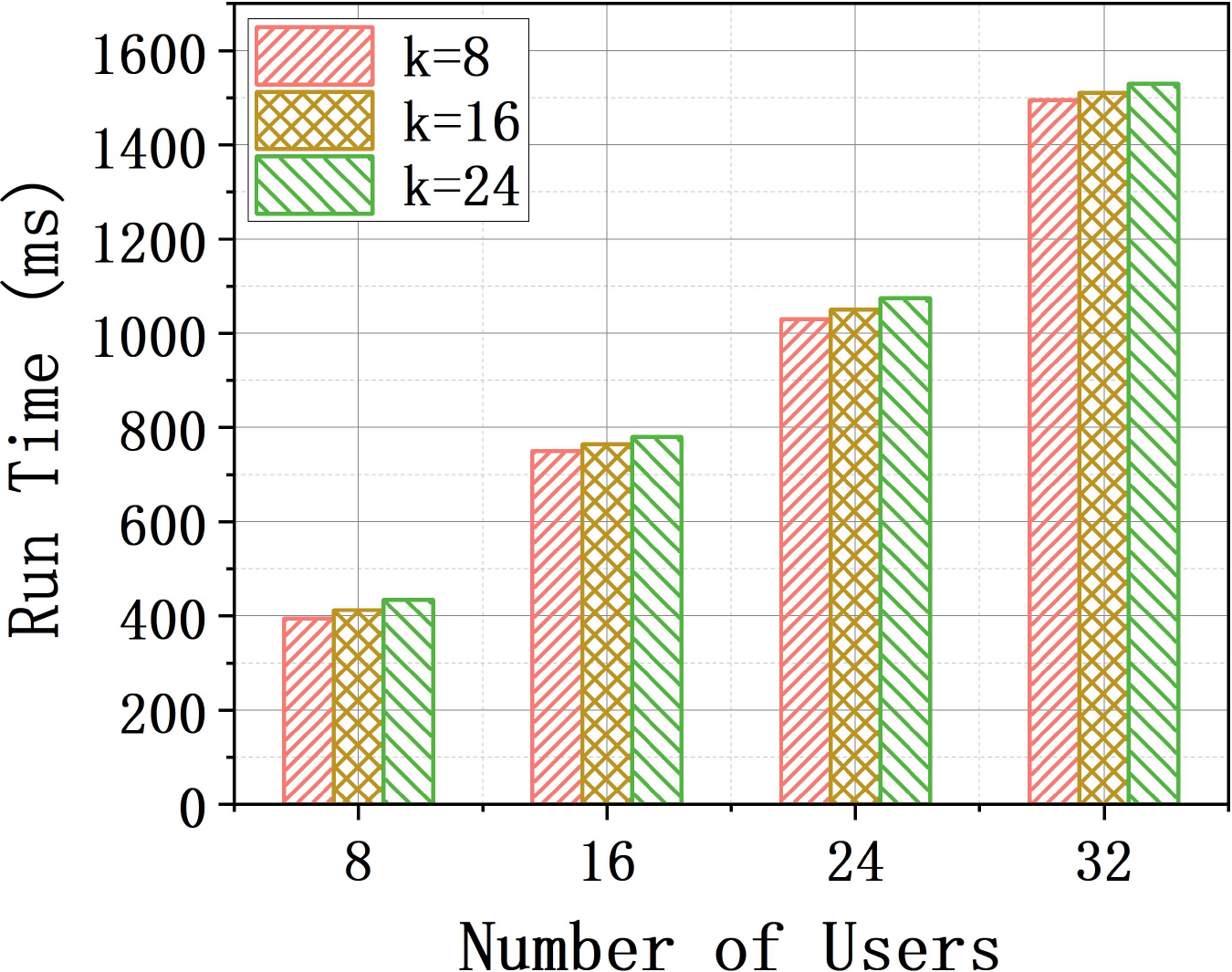}
    
    }
    
    \caption{Experimental performances. (a) time for transaction creation; (b) time for signature aggregation; (c) time for extraction; (d) time for timed-release decryption; (e) time for adaption; (f) time for verification. $k$ represents the number of deferred transactions for users.
}
    \label{fig:Experiment}
\end{figure*}

\section{Performance}\label{performance} 
In this section, we evaluate and analyze the performance of \textsf{Epass}.

\subsection{Settings}
\textbf{Setup.} The operating environment we conducted our experiments was Windows 11, the system type was a 64-bit operating system, and the device was configured with an Intel(R) Core(TM) i5-10210U CPU @ 1.60 GHz 2.11 GHz and 8 GB of RAM on board. We used Java to implement the simulation, using the Java PairingBased Cryptography Library (JPBC) and Type A pairings to perform the \textsf{Epass} scheme. In the measurement, we use Java's function \textbf{Java.lang.System.currentTimeMillis()} to measure the running time from the start of the operation to the end of the operation. The experimental results are shown in \Cref{table1}, where $k$ represents the number of deferred transactions for users.

\textbf{Baselines for Comparison.} 
We use the locally verifiable signature algorithm proposed in \cite{goyal2022locally} (\textsf{LVS}) as a baseline. \textsf{LVS} can protect users' deferred payment information from being exploited by malicious third parties and reduce the time overhead of the system. However, \textsf{LVS} requires separate validation for each deferred payment transaction, which does not satisfy the efficiency of our design goals. Therefore, \textsf{Epass} extends the single verification in \textsf{LVS} to a subset of validations to reduce the time overhead.

Next, the most advanced implementation of BNPL schemes \cite{Paylaterproject2022, Atpay, Apenow} handles deferred payment transactions via Ethernet smart contracts, so we use \textsf{Geth} to represent Ethernet smart contracts. Note that \textsf{Geth} implements deferred payments by generating new transactions, which somewhat undermines the scalability of the BNPL service. For this reason, \textsf{Epass} rewrites deferred payment transactions through the trapdoor of the chameleon hash \cite{krawczyk1998chameleon}, which does not generate additional new transactions during deferred payment and reduces the burden that \textsf{Geth} imposes on the blockchain.

\textbf{Indicators.} In the experiment, we evaluated the following indicators: \textrm{i) \textit{time costs:}} the time cost of system initialization, transaction making, transaction rewriting, and transaction verification in \textsf{Epass}, and \textrm{ii) \textit{communication costs:}} the size of the ciphertext generated by the user in the transaction-making phase.

\begin{figure*}[tb]
    \centering
    \subfigure[\label{fig:b}]{
    \includegraphics[width=5.6cm]{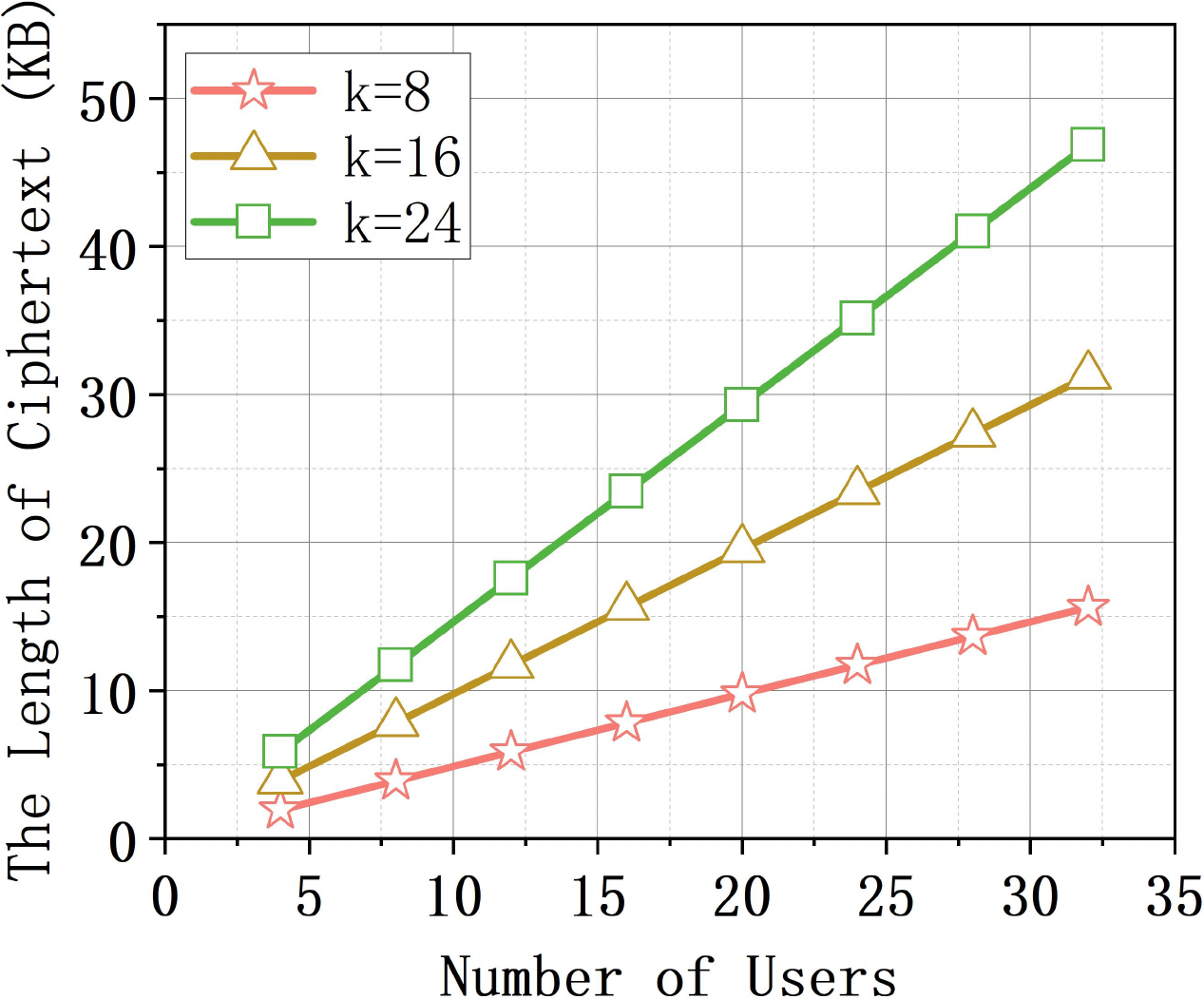}
    \label{fig:communication}
    }
    \subfigure[\label{fig:c}]{
    \includegraphics[width=5.8cm]{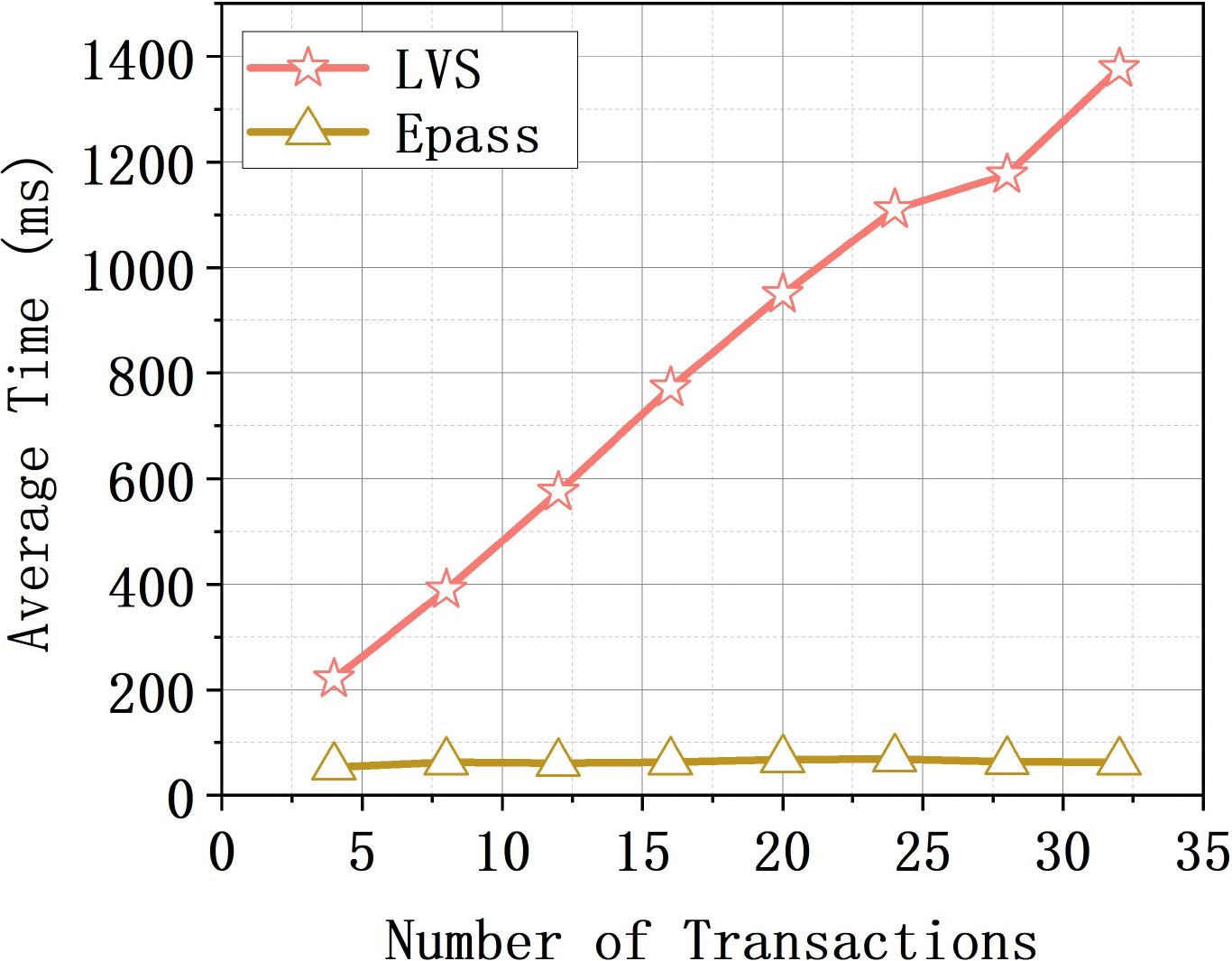}
    \label{fig:12}
    }
    \subfigure[\label{fig:a}]{
    \includegraphics[width=5.3cm]{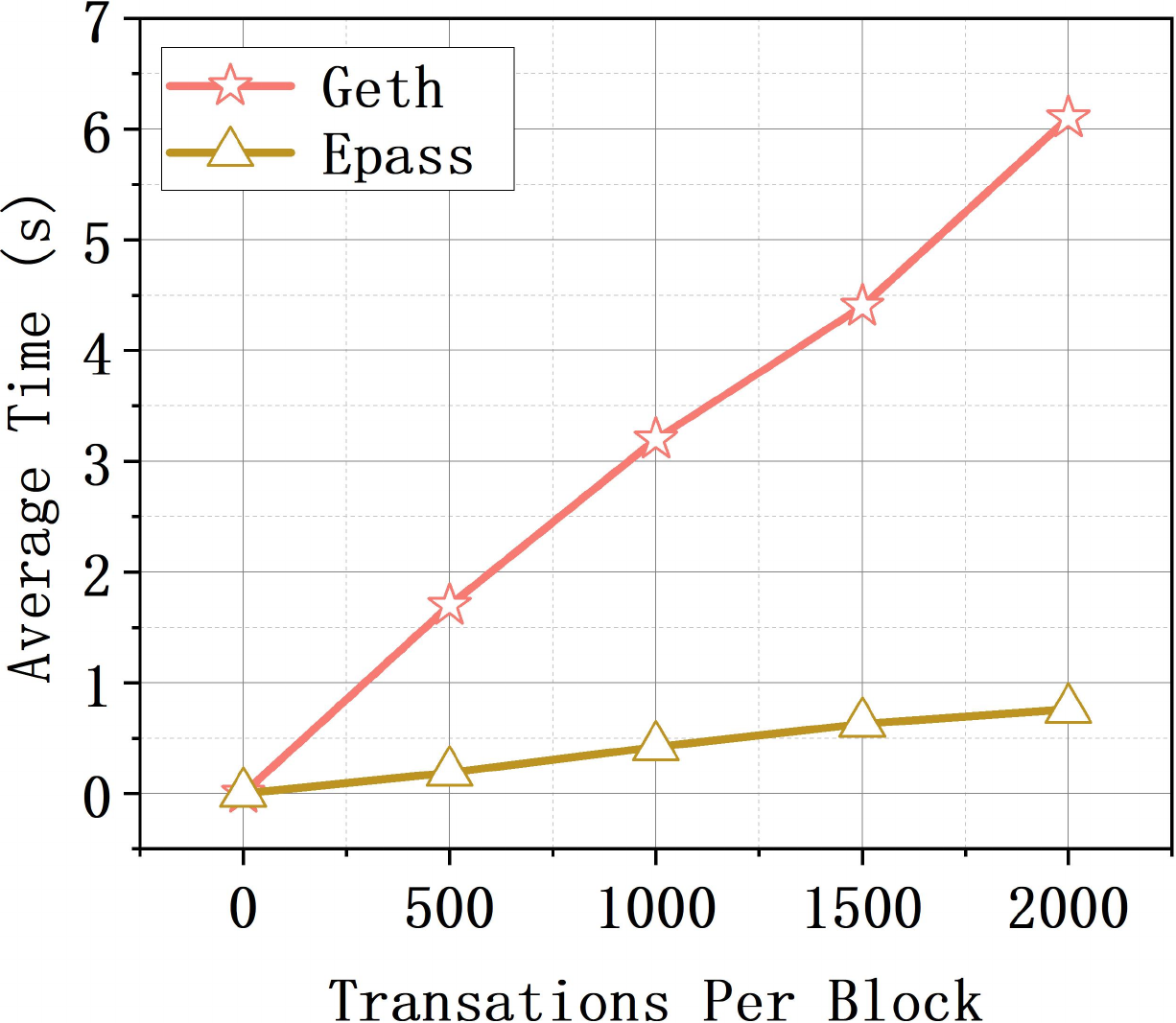}
    \label{fig:11}
    }

    \caption{Experimental performances. (a) time for process deferred transactions; (b) communication costs; (c) time for locally verification.
}
\end{figure*}

\subsection{Evaluation Results}

\textbf{Time costs.} As shown in \Cref{table1}, the time consumption in the system initialization phase (system setup, user key generation, provider key generation, and server key generation) is stable as the number of users and the number of deferred transactions increases. Among them, the primary time cost of the system initialization phase is concentrated on user key generation and system setup. With the number of users being 32 and the number of deferred transactions being 24, the time consumption of user key generation is less than 500ms, while the time consumption of system setup is less than 180ms. The time consumption of provider key generation and server key generation is negligible compared to the previous two phases. As the number of users and the number of deferred transactions increases, the time consumption of server key generation remains between 25 and 35 ms, while the time consumption of provider key generation is consistently below 20 ms. The results show that the time cost of \textsf{Epass} is acceptable during the system initialization phase. 

The time consumption of transaction making (transaction creation, signature aggregation, and extraction), transaction rewriting (timed-release decryption and adaption), and transaction verification (verification) phases tend to increase linearly with the increase of input dimension. As shown in \Cref{fig:Experiment}, the input dimension appears as a set of users and a set of deferred payment transactions, and the size of the set depends on the number of users and deferred payment transactions, where $k$ denotes the number of deferred payment transactions. The experimental results show that the primary time consumption comes from the transaction making phase. With the number of users at 32 and the number of deferred transactions at 24, the time consumption of the transaction making phase reaches 4000ms, of which transaction creation consumes about 1050ms, signature aggregation consumes about 1240ms, and extraction consumes about 1800ms. In the same input dimension, the time consumption of the transaction rewriting phase and transaction verification phase is 2860ms and 1500ms, respectively. In contrast, the time consumption of timed-release decryption, adoption, and verification is 2500ms, 380ms, and 1500ms, respectively. Moreover, the number of deferred transactions has less impact on the time consumption compared to the number of users. This is due to the fact that the operations involved in delayed transactions are usually exponential in nature and do not impose a significant burden on the computational overhead. In summary, the time cost of \textsf{Epass} is acceptable.

\textbf{Communication costs.} As shown in \Cref{fig:communication}, we use the size of the ciphertext generated by users in the transaction making phase to evaluate the communication cost for the different numbers of users and the different numbers of deferred transactions. \Cref{fig:communication} shows that the communication costs increase linearly with the number of users and the number of deferred transactions. When the number of users reaches 32, and the number of deferred transactions reaches 24, the communication cost is still less than 50KB. The experimental results show that the number of users and the number of deferred transactions impact the communication cost. The communication cost of \textsf{Epass} is acceptable when appropriate parameters are chosen.

\textbf{Comparison with baseline.} Next, we extend \textsf{LVS} even further with the locally verifiable signature that supports subset verification. As shown in \Cref{fig:11}, the time consumption of \textsf{LVS} grows linearly with the number of signatures to be verified. As the number of signatures to be verified increases from 4 to 32, the time consumption grows from 200ms to 1400ms. In comparison, when the number of signatures to be verified is 4 and 32, the time consumption of \textsf{Epass} to verify these signatures is less than 50ms and 70ms, respectively. The results show that the time cost of \textsf{Epass} is acceptable when dealing with local verification of signatures.

Finally, we created a private test network using Go-Ethereum and compared the time required to process deferred transactions between the baseline and \textsf{Epass}. We performed block creation experiments in which the block difficulty target was set to $0x0400$. This allowed us to accurately measure the operation time without increasing the work permit overhead while avoiding causing extreme blockchain forks. As shown in \Cref{fig:12}, the time overheads of \textsf{Epass} and \textsf{Geth} both grow linearly. When transactions per block reach 2000, the time overhead of \textsf{Epass} is less than 1s, while \textsf{Geth} reaches 6s, which is six times higher than our scheme. The experimental results show that as the number of transactions in each block increases, \textsf{Epass} leads to better efficiency and is more suitable for real-world applications.

\section{Related Work}  \label{overview}
This section briefly reviews the background and related work on BNPL.

\subsection{Background}
Over the past few years, BNPL has gradually increased in popularity among financial institutions, merchants, and consumers as online shopping has proliferated in pandemic proportions \cite{sengupta2022bnpl, muparadzi2021business}. BNPL companies have created one of the fastest-growing segments of consumer finance, according to GlobalData \cite{feng2023pricing}. BNPL allows consumers to purchase a product immediately and pay for it over a period of time, usually in fixed installments \cite{guttman2022buy}. It is a reverse-assist system where the purchaser can immediately get the product or service and then pay for it. Like how digital assets and blockchain technology have infiltrated most businesses, BNPL platforms enable their customers to use cryptocurrencies. Affirm \cite{Affirm}, Zip \cite{Zip}, Klarna \cite{Klarna}, and XRPayNet \cite{XRPayNet} are experimenting with their blockchains to create an application that offers BNPL functionality. Affirm's app brings all its products together in one place. At the same time, it's google chrome extension allows customers to use Affirm at various retailers, even if the service is not integrated into their checkout options \cite{Affirm}. Zip offers the same type of service as Affirm, allowing customers to buy now and then pay weekly or monthly, which is great flexibility for customers. In addition, Zip offers a digital wallet that can carry up to \$1,000 and is interest-free \cite{fisher2021developments}. Klarna works with more than 5,000 banks in 18 European countries by partnering with Swedish cryptocurrency broker Safello \cite{Safello} to provide it with open banking infrastructure. Safello's 180,000 customers will now use Klarna's available banking payment system to buy cryptocurrencies without leaving Safello's platform \cite{Klarna&Safello}. XRPayNet allows businesses to continue using their existing processing systems, making the crypto-to-fiat payment process seamless \cite{XRPayNet2022}. As more and more people discover cryptocurrency, businesses are beginning to accept it as a payment method. It was only a matter of time before platforms combined it with BNPL options.

\subsection{Blockchain-based Buy Now Pay Pater}
The explosive growth of the e-commerce industry has provided consumers worldwide with multiple ways to purchase their favorite products while saving time and money. With the integration of options such as cryptocurrency and BNPL, customers and merchants alike are reaping additional benefits. However, only some e-commerce platforms offer these benefits to customers. But so far, there has been little discussion about blockchain-based BNPL. Until recently, only a few popular companies were offering this service through smart contracts, including PLP \cite{Paylaterproject2022}, Atpay \cite{Atpay}, and Apenow \cite{Apenow}. There are two types of participants in these solutions, the user (consumer) and the merchant. Users on the blockchain-based BNPL platform can make purchases and earn rewards. Verified users can shop online and in-store from any platform-approved merchant. At checkout, users can select the BNPL feature to pay. When they complete their repayment, they receive rewards in the form of tokens that can also be used for future purchases or to earn higher purchase limits. Merchants on the blockchain-based BNPL platform can grow and expand their business and incentivize their customers with marketing campaigns. Using the blockchain-based BNPL service, users and merchants can benefit from it.

PLP \cite{Paylaterproject2022} is a DEFI protocol and is the first BNPL platform built to integrate blockchain technology with its own cryptocurrency. While providing significant cost savings to all participants in the ecosystem by leveraging smart contract technology and blockchain, PLP will also allow users to pay for their purchases with any recognized cryptocurrency they hold in their PLP wallet. 

Atpay \cite{Atpay} is merging blockchain and cryptocurrency technology with the BNPL concept, enabling consumers to shop online and offline and use specific cryptocurrencies when paying from the platform's native wallets. Customers get access to a wide range of payment options, significant cost savings, and rewards when they shop. The platform is also supported by its native cryptocurrency @Pay tokens, allowing holders to participate in the management of the agreement. 

Apenow \cite{Apenow} is an NFT installment purchase agreement built on Teller that supports the ability of buyers to finance their next purchase. Users can make a down payment on an NFT purchase, and the remaining amount can be paid in installments. The platform holds the NFT in escrow until the full payment is completed, and the user can withdraw the NFT to their wallet only after the payment is completed. If the user does not complete the installments by the agreed deadline, the NFT will be liquidated and reimbursed to the loan provider. 

However, these solutions do not consider the privacy issues arising from on-chain data transparency and the additional time overhead caused by deferred payments. Specifically, the transactions generated by users are visible to all nodes in the blockchain so that a malicious third party may analyze users' wealth status based on their deferred payment transaction information. Besides, deferred payment transactions generated by users incur additional time overhead, which undermines the scalability of the blockchain-based BNPL service. Therefore, in this work, we construct an efficient and privacy-preserving blockchain deferred payment solution to address the problems that exist in the current work.

\section{Conclusion}\label{conl}
This work investigates the privacy and efficiency issues associated with the BNPL model in the blockchain. To address these issues, we propose \textsf{Epass}, an efficient and privacy-preserving blockchain-based asynchronous payment scheme. By extending single verification of locally verifiable signatures to a subset of verification, \textsf{Epass} has lower time consumption while protecting consumer privacy. Asynchronous payments are implemented by leveraging timed-release encryption and redactable blockchain, saving time and arithmetic power compared to existing schemes. Extensive experiments and security analysis show that \textsf{Epass} has practical security features and acceptable communication cost (KB-level). The time overhead was reduced by more than four times compared to the baseline.



\bibliographystyle{IEEEtran}
\bibliography{reference}

\begin{thebibliography}{10}
\providecommand{\url}[1]{#1}
\csname url@samestyle\endcsname
\providecommand{\newblock}{\relax}
\providecommand{\bibinfo}[2]{#2}
\providecommand{\BIBentrySTDinterwordspacing}{\spaceskip=0pt\relax}
\providecommand{\BIBentryALTinterwordstretchfactor}{4}
\providecommand{\BIBentryALTinterwordspacing}{\spaceskip=\fontdimen2\font plus
\BIBentryALTinterwordstretchfactor\fontdimen3\font minus \fontdimen4\font\relax}
\providecommand{\BIBforeignlanguage}[2]{{%
\expandafter\ifx\csname l@#1\endcsname\relax
\typeout{** WARNING: IEEEtran.bst: No hyphenation pattern has been}%
\typeout{** loaded for the language `#1'. Using the pattern for}%
\typeout{** the default language instead.}%
\else
\language=\csname l@#1\endcsname
\fi
#2}}
\providecommand{\BIBdecl}{\relax}
\BIBdecl

\bibitem{bnpl-crypto}
``Buy now, pay later. what is the case with crypto?'' 2022, \url{https://blog.mercuryo.io/post/bnpl-crypto}.

\bibitem{CryptoFees}
``Crypto fees,'' 2023, \url{https://cryptofees.info/}.

\bibitem{Klarna}
(2021) Klarna goes crypto: Bnpl fintech dips toes into crypto waters with safello partnership. \url{https://www.altfi.com/}.

\bibitem{Affirm}
(2022) Affirm debuts app that will speed up planned crypto offering. \url{https://www.digitalcommerce360.com/}.

\bibitem{Paylaterproject2022}
``A defi buy now pay later ecommerce solution for consumers merchants,'' 2022, \url{https://paylaterproject.com/}.

\bibitem{Atpay}
``Defi ecommerce payment solution for shoppers, merchants web3 businesses,'' 2022, \url{https://atpay.io/}.

\bibitem{Apenow}
``Buy now, pay later for nfts has arrived,'' 2022, \url{https://apenowpaylater.com/}.

\bibitem{sinha2021blockchain}
D.~Sinha and S.~R. Chowdhury, ``Blockchain-based smart contract for international business--a framework,'' \emph{Journal of Global Operations and Strategic Sourcing}, 2021.

\bibitem{shah2019blockchain}
P.~Shah, D.~Forester, M.~Berberich, C.~Rasp{\'e}, and H.~Mueller, ``Blockchain technology: Data privacy issues and potential mitigation strategies,'' \emph{Practical Law}, 2019.

\bibitem{zhang2019tppr}
C.~Zhang, L.~Zhu, C.~Xu, K.~Sharif, K.~Ding, X.~Liu, X.~Du, and M.~Guizani, ``Tppr: A trust-based and privacy-preserving platoon recommendation scheme in vanet,'' \emph{IEEE Transactions on Services Computing}, vol.~15, no.~2, pp. 806--818, 2022.

\bibitem{li2022blockchain}
S.~Li, C.~Xu, Y.~Zhang, Y.~Du, and K.~Chen, ``Blockchain-based transparent integrity auditing and encrypted deduplication for cloud storage,'' \emph{IEEE Transactions on Services Computing}, vol.~16, no.~1, pp. 134--146, 2023.

\bibitem{gupta2021blockchain}
S.~Gupta and M.~Sadoghi, ``Blockchain transaction processing,'' \emph{arXiv preprint arXiv:2107.11592}, 2021.

\bibitem{Zhang2022TMC}
S.~Li, Y.~Zhang, C.~Xu, N.~Cheng, Z.~Liu, Y.~Du, and X.~Shen, ``Healthfort: A cloud-based ehealth system with conditional forward transparency and secure provenance via blockchain,'' \emph{IEEE Transactions on Mobile Computing}, pp. 1--18, 2022.

\bibitem{kaur2012digital}
R.~Kaur and A.~Kaur, ``Digital signature,'' in \emph{2012 International Conference on Computing Sciences}, 2012, pp. 295--301.

\bibitem{merkle1988digital}
R.~C. Merkle, ``A digital signature based on a conventional encryption function,'' in \emph{Advances in Cryptology—CRYPTO’87: Proceedings 7}, 1988, pp. 369--378.

\bibitem{krawczyk1998chameleon}
H.~Krawczyk and T.~Rabin, ``Chameleon hashing and signatures,'' \emph{Cryptology ePrint Archive}, 1998.

\bibitem{ateniese2004identity}
G.~Ateniese and B.~de~Medeiros, ``Identity-based chameleon hash and applications,'' in \emph{Financial Cryptography: 8th International Conference, FC 2004, Key West, FL, USA, February 9-12, 2004. Revised Papers 8}, 2004, pp. 164--180.

\bibitem{dent2007revisiting}
A.~W. Dent and Q.~Tang, ``Revisiting the security model for timed-release encryption with pre-open capability,'' in \emph{ISC}, vol. 4779, 2007, pp. 158--174.

\bibitem{mao2001timed}
W.~Mao, ``Timed-release cryptography,'' \emph{Cryptology ePrint Archive}, 2001.

\bibitem{mont2003hp}
M.~C. Mont, K.~Harrison, and M.~Sadler, ``The hp time vault service: exploiting ibe for timed release of confidential information,'' in \emph{Proceedings of the 12th international conference on World Wide Web}, 2003, pp. 160--169.

\bibitem{di1999conditional}
G.~Di~Crescenzo, R.~Ostrovsky, and S.~Rajagopalan, ``Conditional oblivious transfer and timed-release encryption,'' in \emph{EuroCrypt}, vol.~99, 1999, pp. 74--89.

\bibitem{cathalo2005efficient}
J.~Cathalo, B.~Libert, and J.-J. Quisquater, ``Efficient and non-interactive timed-release encryption,'' in \emph{Information and Communications Security: 7th International Conference, ICICS 2005, Beijing, China, December 10-13, 2005. Proceedings 7}, 2005, pp. 291--303.

\bibitem{goldwasser1988digital}
S.~Goldwasser, S.~Micali, and R.~L. Rivest, ``A digital signature scheme secure against adaptive chosen-message attacks,'' \emph{SIAM Journal on computing}, vol.~17, no.~2, pp. 281--308, 1988.

\bibitem{goyal2022locally}
R.~Goyal and V.~Vaikuntanathan, ``Locally verifiable signature and key aggregation,'' \emph{Cryptology ePrint Archive}, 2022.

\bibitem{sengupta2022bnpl}
A.~Sengupta, ``Bnpl as a new financial instrument its impact on consumer’s buying behaviour,'' \emph{Available at SSRN 4175857}, 2022.

\bibitem{muparadzi2021business}
T.~Muparadzi and L.~Rodze, ``Business continuity management in a time of crisis: emerging trends for commercial banks in zimbabwe during and post the covid-19 global crisis,'' \emph{Open Journal of Business and Management}, vol.~9, no.~3, pp. 1169--1197, 2021.

\bibitem{feng2023pricing}
L.~Feng, J.-T. Teng, and F.~Zhou, ``Pricing and lot-sizing decisions on buy-now-and-pay-later installments through a product life cycle,'' \emph{European Journal of Operational Research}, vol. 306, no.~2, pp. 754--763, 2023.

\bibitem{guttman2022buy}
B.~Guttman-Kenney, C.~Firth, and J.~Gathergood, ``Buy now, pay later (bnpl)... on your credit card,'' \emph{arXiv preprint arXiv:2201.01758}, 2022.

\bibitem{Zip}
(2021) Pay later where you want. \url{https://www.reuters.com/}.

\bibitem{XRPayNet}
(2023) The world's most diverse payment system. \url{https://xrpaynet.com/}.

\bibitem{fisher2021developments}
C.~Fisher, C.~Holland, and T.~West, ``Developments in the buy now, pay later market,'' \emph{1. 1 Cash Demand During COVID-19 2. 12 Property Settlement in RITS 3. 21 From the Archives: The London Letters 4. The Anatomy of a Banking Crisis: Household Depositors in the Australian 33 Depressions}, p.~59, 2021.

\bibitem{Safello}
(2023) Bye sek. hello crypto! \url{https://safello.com/}.

\bibitem{Klarna&Safello}
(2021) Klarna teams up with safello - bringing open banking to cryptocurrency market. \url{https://news.cision.com/safello/}.

\bibitem{XRPayNet2022}
(2022) Xrpaynet: Redefining crypto payments with diverse payment systems! \url{https://www.platinumcryptoacademy.com/}.

\end{thebibliography}

\begin{IEEEbiography}[{\includegraphics[width=1in,height=1.25in,clip,keepaspectratio]{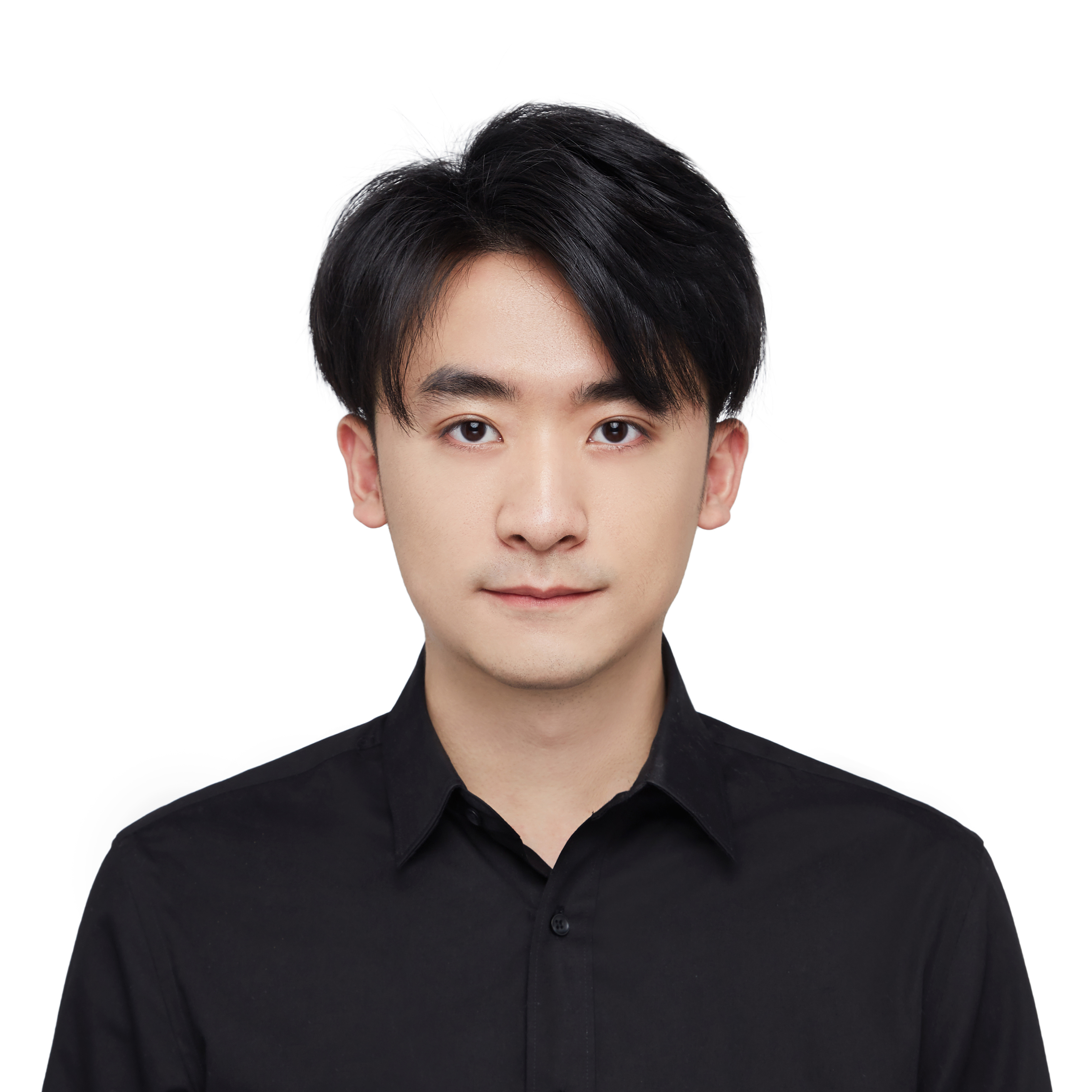}}]{Weijie Wang} received his B.S. degree from Xidian University in 2020. He is currently a master student in the School of Computer Science at Beijing Institute of Technology. His research interests include federal learning, security and privacy in blockchain.
    \end{IEEEbiography}

    \begin{IEEEbiography}[{\includegraphics[width=1in,height=1.25in, clip, keepaspectratio]{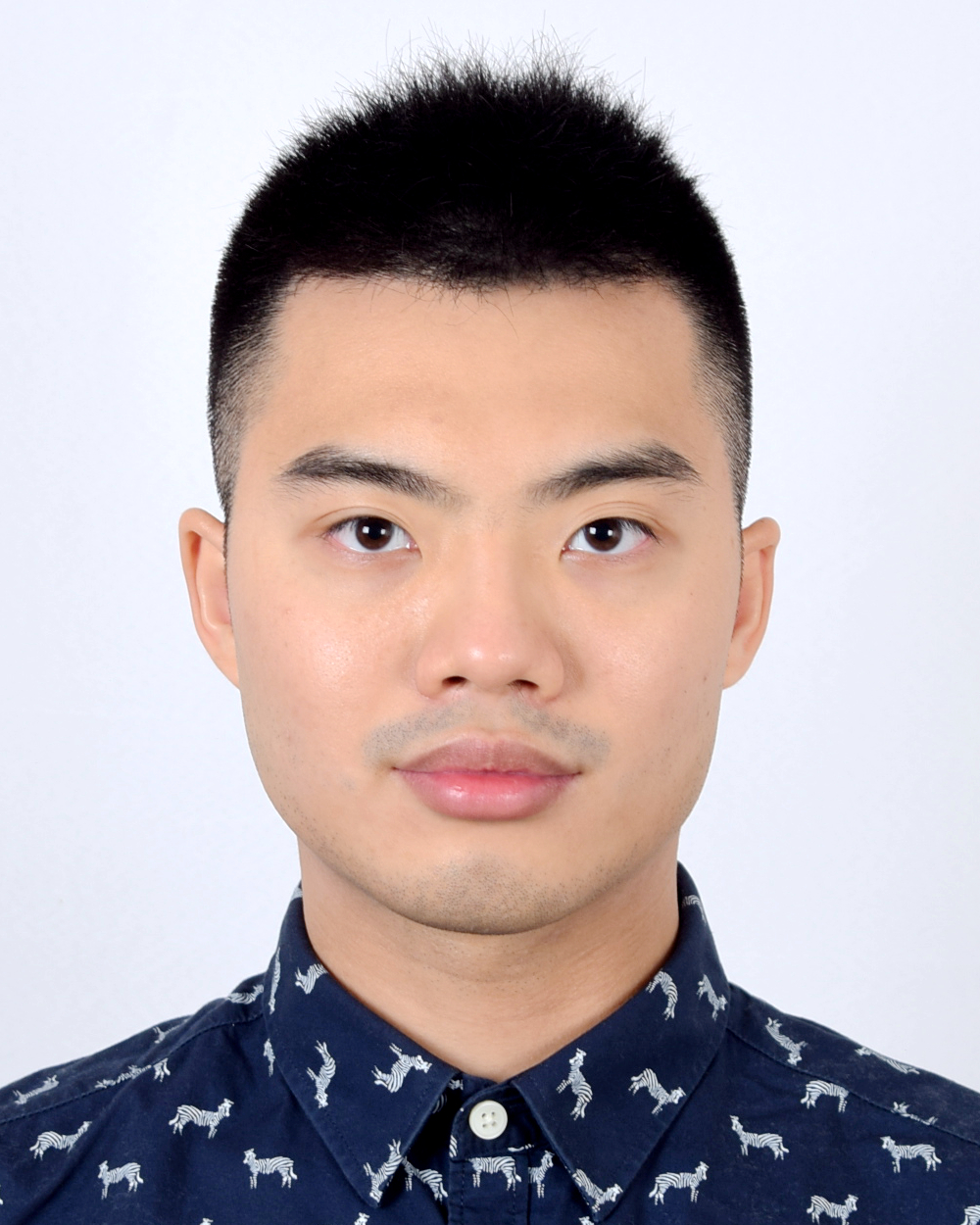}}]{Jinwen Liang} is a Postdoctoral fellow in the Department of Computing of the Hong Kong Polytechnic University. He received his Ph.D. degree and B.S. degree from Hunan University, China, in 2021 and 2015, respectively. From 2018 to 2020, he was a visiting Ph.D. student at BBCR Lab, University of Waterloo, Canada. His research interests include applied cryptography, AI security, blockchain, and database security. He served as the Technical Program Committee Chair of the 1st international workshop on Future Mobile Computing and Networking for Internet of Things (IEEE FMobile 2022), Publicity Chair of the 6th International Workshop on Cyberspace Security (IWCSS 2022), TPC Member of IEEE VTC' 19 Fall. He is a member of the IEEE. 
    \end{IEEEbiography}

    \begin{IEEEbiography}[{\includegraphics[width=1in,height=1.25in,clip,keepaspectratio]{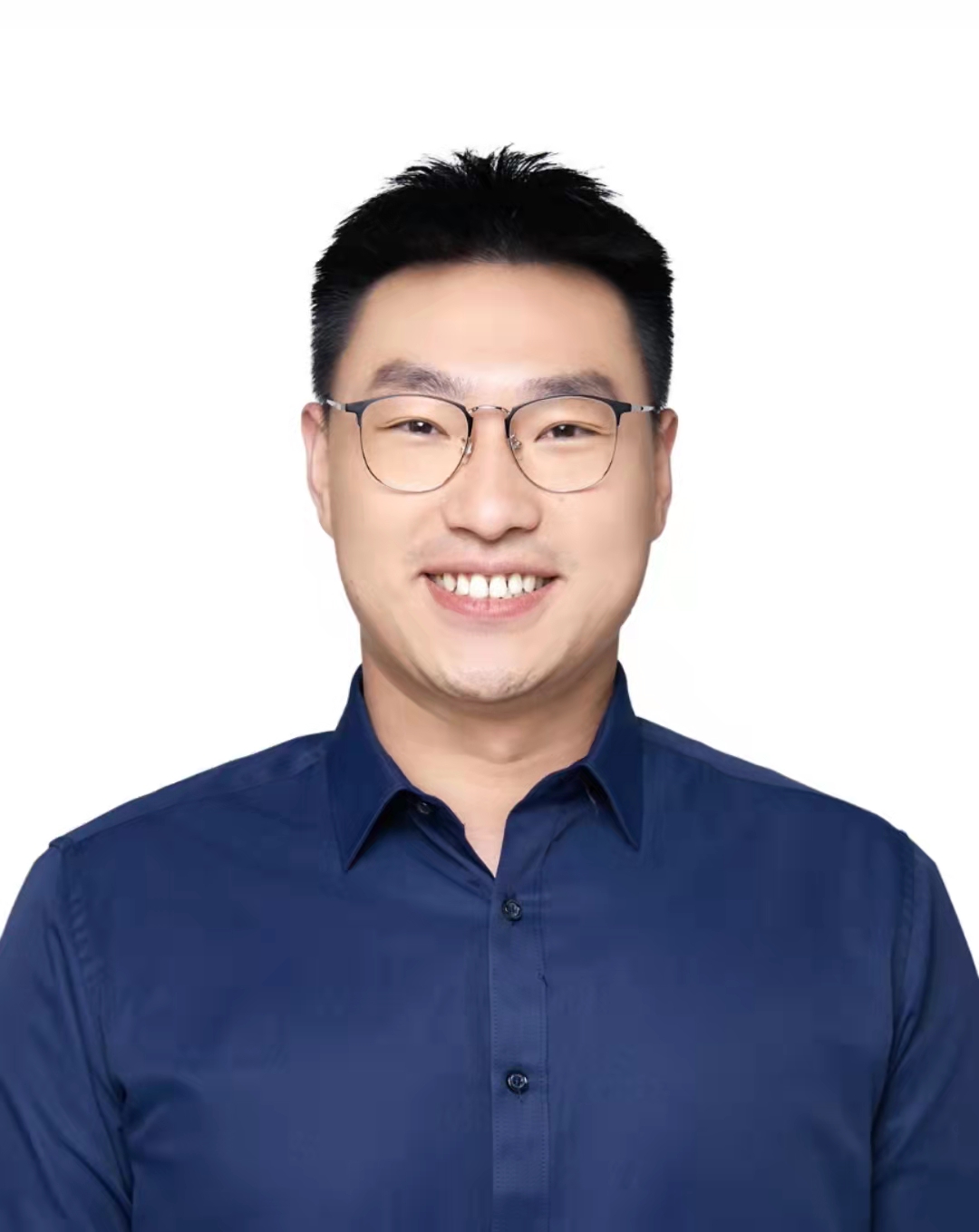}}]{Chuan Zhang} received his Ph.D. degree in computer science from Beijing Institute of Technology, Beijing, China, in 2021. From Sept. 2019 to Sept. 2020, he worked as a visiting Ph.D. student with the BBCR Group, Department of Electrical and Computer Engineering, University of Waterloo, Canada. He is currently an assistant professor at School of Cyberspace Science and Technology, Beijing Institute of Technology. His research interests include secure data services in cloud computing, applied cryptography, machine learning, and blockchain.
    \end{IEEEbiography}

    \begin{IEEEbiography}[{\includegraphics[width=1in,height=1.25in]{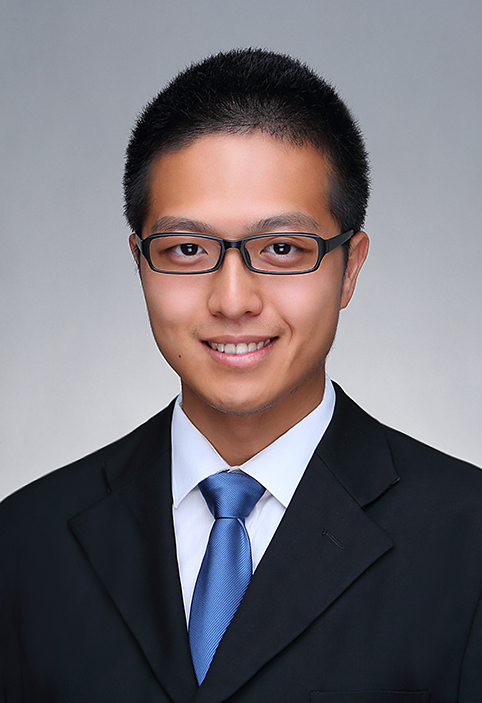}}]{Ximeng Li} (Senior Member, IEEE) received the B.S. degree in electronic engineering from Xidian University, Xi’an, China, in 2010 and the PhD degree in cryptography from Xidian University, China, in 2015. Currently, he is a full professor with the College of Mathematics and Computer Science, Fuzhou University, China. Also, he is a research fellow with the School of Information System, Singapore Management University, Singapore. His research interests include cloud security, applied cryptography and big data security.
    \end{IEEEbiography}

    \begin{IEEEbiography}[{\includegraphics[width=1in,height=1.25in]{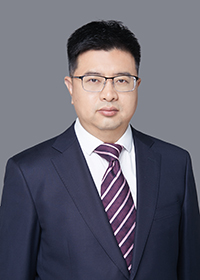}}]{Liehuang Zhu} (Senior Member, IEEE) received his Ph.D. degree in computer science from Beijing Institute of Technology, Beijing, China, in 2004. He is currently a professor at the School of Cyberspace Science and Technology, Beijing Institute of Technology. His research interests include security protocol analysis and design, group key exchange protocols, wireless sensor networks, and cloud computing.
    \end{IEEEbiography}

    \begin{IEEEbiography}[{\includegraphics[width=1in,height=1.25in]{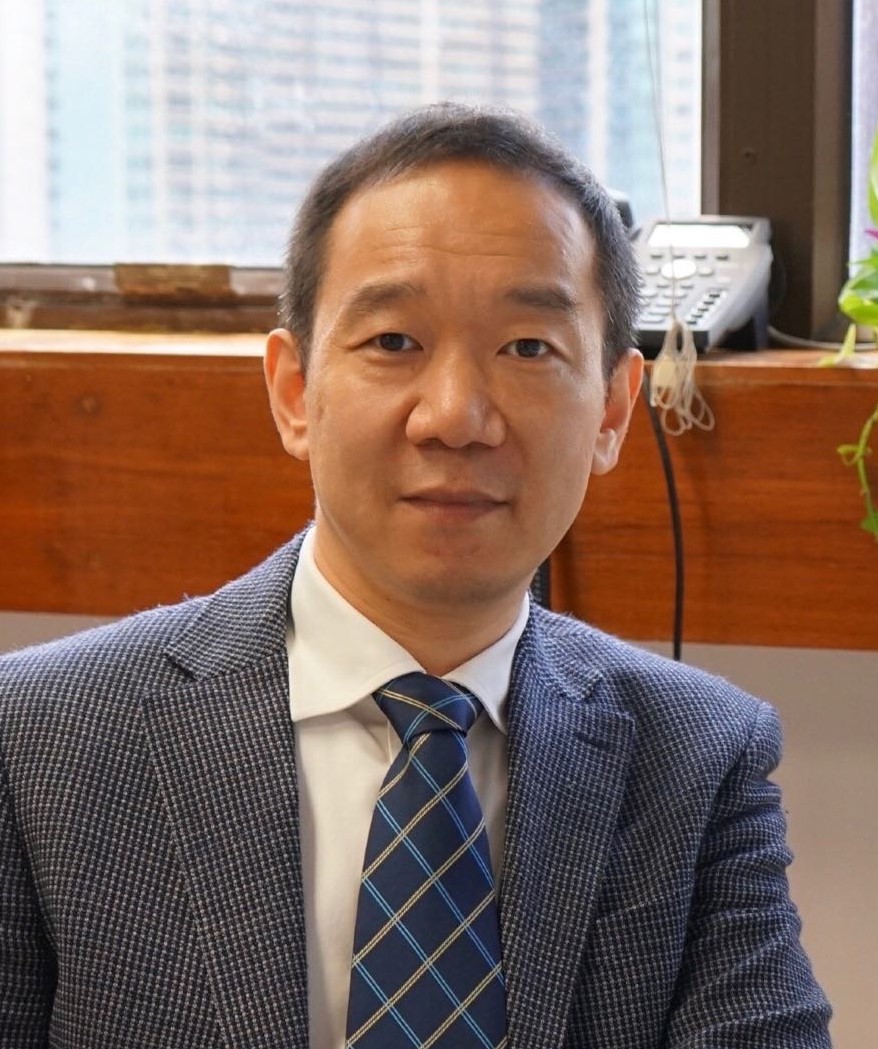}}]{Song Guo} (Fellow, IEEE) is a Full Professor in the Department of Computing at The Hong Kong Polytechnic University. He also holds a Changjiang Chair Professorship awarded by the Ministry of Education of China. His research interests are mainly in the areas of big data, edge AI, mobile computing, and distributed systems. With many impactful papers published in top venues in these areas, he has been recognized as a Highly Cited Researcher (Web of Science) and received over 12 Best Paper Awards from IEEE/ACM conferences, journals and technical committees. Prof. Guo is the Editor-in-Chief of IEEE Open Journal of the Computer Society. He has served on IEEE Communications Society Board of Governors, IEEE Computer Society Fellow Evaluation Committee, and editorial board of a number of prestigious international journals like IEEE Transactions on Parallel and Distributed Systems, IEEE Transactions on Cloud Computing, IEEE Internet of Things Journal, etc. He has also served as chair of organizing and technical committees of many international conferences. Prof. Guo is an IEEE Fellow and an ACM Distinguished Member.
    \end{IEEEbiography}

\vfill

\end{document}